\pdfoutput=1
\documentclass[11pt]{article}
\usepackage{epsfig}
\usepackage{amsthm}
\usepackage{amsmath}
\usepackage{amsfonts}
\usepackage{url}
\usepackage{subfigure}



\title{\textbf{Staged Self-Assembly:} \\
  Nanomanufacture of Arbitrary Shapes with $O(1)$ Glues}
\author{
  Erik D. Demaine%
    \thanks{MIT Computer Science and Artificial Intelligence Laboratory,
      32 Vassar St., Cambridge, MA 02139, USA,
      \protect\url{{edemaine,mdemaine}@mit.edu}.
      Partially supported by NSF CAREER award CCF-0347776
      and DOE grant DE-FG02-04ER25647.}
\and
  Martin L. Demaine\footnotemark[1]
\and
  S\'andor P. Fekete%
    \thanks{Institut f\"ur Mathematische Optimierung,
      Technische Universit\"at Braunschweig,
      Pockelsstr.~14, 38106 Braunschweig, Germany,
      \protect\url{s.fekete@tu-bs.de}}
\and
  Mashhood Ishaque%
    \thanks{Department of Computer Science, Tufts University,
      Medford, MA 02155, USA, \protect\url{{mishaq01,dls}@cs.tufts.edu}.
      Partially supported by NSF grant CCF-0431027.}
\and
  Eynat Rafalin%
    \thanks{Google Inc., \protect\url{erafalin@cs.tufts.edu}.
      Work performed while at Tufts University.
      Partially supported by NSF grant CCF-0431027.}
\and
  Robert T. Schweller%
    \thanks{Department of Computer Science, University of Texas-Pan American, Edinburg, TX 78539, USA,
      \protect\url{schwellerr@cs.panam.edu}}
\and
  Diane L. Souvaine%
    \footnotemark[3]
}
\date{}

\newif\ifabstract
\abstractfalse
\newif\iffull
\ifabstract \fullfalse \else \fulltrue \fi

\usepackage
  [breaklinks,bookmarks,bookmarksnumbered,bookmarksopen,bookmarksopenlevel=2]
  {hyperref}
{\makeatletter \hypersetup{pdftitle={\@title}}}

{\makeatletter
 \gdef\xxxmark{%
   \expandafter\ifx\csname @mpargs\endcsname\relax 
     \expandafter\ifx\csname @captype\endcsname\relax 
       \marginpar{xxx}
     \else
       xxx 
     \fi
   \else
     xxx 
   \fi}
 \gdef\xxx{\@ifnextchar[\xxx@lab\xxx@nolab}
 \long\gdef\xxx@lab[#1]#2{{\bf [\xxxmark #2 ---{\sc #1}]}}
 \long\gdef\xxx@nolab#1{{\bf [\xxxmark #1]}}
 \long\gdef\xxx@lab[#1]#2{}\long\gdef\xxx@nolab#1{}%
}
\topmargin 0pt
\advance \topmargin by -\headheight
\advance \topmargin by -\headsep
\textheight 9.1in \oddsidemargin 0pt \evensidemargin \oddsidemargin
\marginparwidth 0.5in

\oddsidemargin -.15in \textwidth 6.8in


\newtheorem{theorem}{Theorem}

\newtheorem{corollary}{Corollary}

\let\realbfseries=\bfseries
\def\bfseries{\realbfseries\boldmath}


\def\compactify{\itemsep=0pt \topsep=0pt \partopsep=0pt \parsep=0pt}
\let\latexusecounter=\usecounter
\newenvironment{itemize*}
  {\def\usecounter{\compactify\latexusecounter}
   \begin{itemize}}
  {\end{itemize}\let\usecounter=\latexusecounter}
\newenvironment{enumerate*}
  {\def\usecounter{\compactify\latexusecounter}
   \begin{enumerate}}
  {\end{enumerate}\let\usecounter=\latexusecounter}
\newenvironment{description*}
  {\begin{description}\compactify}
  {\end{description}}

\def\captionfont{\small\rm}
\def\captionlabelfont{\small\bf}
{\makeatletter
 \global\let\plainfont@makecaption=\@makecaption
 \long\gdef\@makecaption#1#2{%
   \plainfont@makecaption{\captionlabelfont #1}{\captionfont #2}}}

\def\nullglue{{\rm null}}
\def\emptytile{{\rm empty}}
\def\north{\mathop{\rm north}\nolimits}
\def\south{\mathop{\rm south}\nolimits}
\def\west{\mathop{\rm west}\nolimits}
\def\east{\mathop{\rm east}\nolimits}
\def\diameter{\mathop{\rm diameter}\nolimits}

\begin{document}
\maketitle

\vspace{-.25in}

\begin{abstract}
We introduce \emph{staged self-assembly} of Wang tiles,
where tiles can be added dynamically in sequence and
where intermediate constructions can be stored for later mixing.
This model and its various constraints and performance measures
are motivated by a practical nanofabrication scenario through
protein-based bioengineering.
Staging allows us to break through the traditional lower bounds in
tile self-assembly by encoding the shape in the staging algorithm
instead of the tiles.  All of our results are based on the practical
assumption that only a constant number of glues, and thus only a constant
number of tiles, can be engineered. Under this assumption, traditional tile self-assembly cannot even manufacture
an $n \times n$ square; in contrast, we show how staged assembly in theory enables
manufacture of arbitrary shapes
in a variety of precise formulations of the model.  
\end{abstract}




\section{Introduction}
\label{sec:intro}

\emph{Self-assembly} is the process by which an organized structure
can form spontaneously from simple parts.  It describes the assembly of
diverse natural structures such as crystals, DNA helices, and microtubules.
In nanofabrication, the idea is to co-opt natural self-assembly
processes to build desired structures, such as a sieve for removing viruses
from serum, a drug-delivery device for targeted chemotherapy or brachytherapy,
a magnetic device for medical imaging, a catalyst for enzymatic reactions,
or a biological computer.
Self-assembly of artificial structures has promising applications to
nanofabrication and biological computing.
The general goal is to design and manufacture nanoscale pieces
(e.g., strands of DNA)
that self-assemble uniquely into a desired macroscale object
(e.g., a computer).

Our work is motivated and guided by an ongoing collaboration with the
Sackler School of Graduate Biomedical Sciences
that aims to nanomanufacture sieves, catalysts, and drug-delivery and
medical-imaging devices, using protein self-assembly.
Specifically, the Goldberg Laboratory is currently developing technology
to bioengineer (many copies of) rigid struts of varying lengths,
made of several proteins,
which can join collinearly to each other at compatible ends.
These struts occur naturally as the ``legs'' of the \emph{T4 bacteriophage},
a virus that infects bacteria by injecting DNA.
In contrast to nanoscale self-assembly based on DNA
\cite{Winfree-Liu-Wenzler-Seeman-1998,Mao-Labean-Reif-Seeman-2000,Rothemund-Papadakis-Winfree-2004,Barish-Rothemund-Winfree-2005,Seeman-1998,Shih-Quispe-Joyce-2004,Rothemund-2006},
which is inherently floppy, these nanorod structures are extremely
rigid and should therefore scale up to the manufacture of macroscale
objects.
%
%

The traditional, leading theoretical model for self-assembly is the
two-dimensional \emph{tile assembly model} introduced by Winfree in
his Ph.d. thesis~\cite{Winfree-1998} and
first appearing at STOC 2000 \cite{Rothemund-Winfree-2000}. The
basic building blocks in this model are \emph{Wang tiles}
\cite{Wang-1961}, unrotatable square tiles with a specified glue on
each side, where equal glues have affinity and may stick. Tiles then
self-assemble into supertiles: two (super)tiles nondeterministically
join if the sum of the glue affinities along the attachment is at
least some threshold~$\tau$, called \emph{temperature}.
This basic model has been generalized and extended in many ways
\cite{Adleman-2000-TR,Adleman-Cheng-Goel-Huang-2001,Adleman-Cheng-Goel-Huang-Kempe-Espanes-Rothemund-2002,Soloveichik-Winfree-2004,Aggarwal-Cheng-Goldwasser-Kao-Espanes-Moisset-Schweller-2005,Rothemund-Winfree-2000,Kao-Schweller-2006}.
The model should be practical because Wang tiles can easily simulate
the practical scenario in which tiles are allowed to rotate,
glues come in pairs, and glues have affinity only for their unique mates.
In particular, we can implement such tiles using two unit-length nanorods
joined at right angles at their midpoints to form a plus sign.

Most theoretical research in self-assembly considers the minimum number of
distinct tiles---the \emph{tile complexity}~$t$---required to assemble a shape
uniquely.
In particular, if we allow the desired shape to be scaled by a possibly
very large factor, then in most models the minimum possible tile complexity
(the smallest ``tile program'') is $\Theta(K/\lg K)$ where $K$ is the
Kolmogorov complexity of the shape \cite{Soloveichik-Winfree-2004}.
In practice, the limiting factor is the number of distinct glues---the
\emph{glue complexity}~$g$---as each new glue type requires significant
biochemical research and experiments.  For example, a set of DNA-based glues
requires experiments to test whether a collection of codewords have a
``conflict'' (a pair of noncomplementary base sequences that attach to
each other), while a set of protein-based glues requires finding pairs of
proteins with compatible geometries and amino-acid placements that bind
(and no other pairs of which accidentally bind).
Of course, tile and glue complexities are related: $g \leq t \leq g^4$.

We present the \emph{staged tile assembly model}, a generalization of
the tile assembly model that captures the temporal aspect of the laboratory
experiment, and enables substantially more flexibility in the design and
fabrication of complex shapes using a small tile and glue complexity.
In its simplest form, staged assembly enables the gradual addition of
specific tiles in a sequence of stages.
In addition, any tiles that have not yet attached as part of a supertile
can be washed away and removed
(in practice, using a weight-based filter, for example).
More generally, we can have any number of \emph{bins} (in reality, batches of
liquid solution stored in separate containers), each containing tiles and/or
supertiles that self-assemble as in the standard tile assembly model.
During a stage, we can perform any collection of operations of two types:
(1)~add (arbitrarily many copies of) a new tile to an existing bin; and
(2)~pour one bin into another bin, mixing the contents of the former bin
into the latter bin, and keeping the former bin intact.
In both cases, any pieces that do not assemble into larger structures are
washed away and removed.
These operations let us build intermediate supertiles in isolation
and then combine different supertiles as whole structures.
Now we have two new complexity measures
in addition to tile and glue complexity:
the number of stages---or \emph{stage complexity}~$s$---measures the time
required by the operator of the experiment, while the number of bins---or
\emph{bin complexity}~$b$---measures the space required for the experiment.%
\footnote{Here we view the mixing time required in each stage (and the
  volume of each bin) as a constant,
  mainly because it is difficult to analyze precisely from a thermodynamic
  perspective, as pointed out in \cite{Adleman-2000-TR}.
  In our constructions, we believe that a suitable design of the relative
  concentrations of tiles (a feature not captured by the model) leads to
  reasonable mixing times.}
(When both of these complexities are~$1$, we obtain the regular tile
assembly model.)

\paragraph{Our results.}
We show that staged assembly enables substantially more efficient manufacture
in terms of tile and glue complexity, without sacrificing much in stage
and bin complexity.  All of our results assume the practical constraint of
having only a small constant number of glues and hence a constant number of
tiles.  In contrast, an information-theoretic argument shows that this
assumption would limit the traditional tile assembly model to constructing
shapes of constant Kolmogorov complexity.

For example, we develop a method for self-assembling an
$n \times n$ square for arbitrary $n > 0$, using $16$ glues and
thus $O(1)$ tiles (independent of~$n$), and using only
$O(\log \log n)$ stages, $O(\sqrt{\log n})$ bins, and temperature
$\tau=2$ (Section~\ref{sec:square crazy}).
Alternatively, with the minimum possible temperature $\tau=1$,
we can self-assemble an $n \times n$ square using $9$ glues, $O(1)$ tiles and
bins, and $O(\log n)$ stages (Section~\ref{sec:jigsaw}).
In contrast, the best possible self-assembly of an $n \times n$ square in the
traditional tile assembly model has tile complexity
$\Theta(\log n / \log \log n)$
\cite{Adleman-Cheng-Goel-Huang-2001,Rothemund-Winfree-2000},
or $\Theta(\sqrt{\log n})$ in a rather extreme generalization of
allowable pairwise glue affinities
\cite{Aggarwal-Cheng-Goldwasser-Kao-Espanes-Moisset-Schweller-2005}.

More generally, we show how to self-assemble arbitrary shapes made up
of $n$ unit squares in a variety of precise formulations of the problem.
Our simplest construction builds the shape using $2$ glues, $16$ tiles,
$O(\diameter)$ stages, and $O(1)$ bins,
but it only glues tiles together according to a spanning tree, which
is what we call the \emph{partial connectivity model}
(Section~\ref{sec:spanning tree}). All other constructions have
\emph{full connectivity}: any two adjacent unit squares are built by
tiles with matching glues along their shared edge. In particular, if
we scale an arbitrary hole-free shape larger by a factor of~$2$,
then we can self-assemble with full connectivity using $8$ glues,
$O(1)$ tiles, and $O(n)$ stages and bins (Section~\ref{sec:factor
2}). We also show how to simulate a traditional tile assembly
construction with $t$ tiles by a staged assembly using $3$ glues,
$O(1)$ tiles, $O(\log \log t)$ stages, $O(t)$ bins, and a scale
factor of $O(\log t)$ (Section~\ref{sec:simulation}). If the shape
happens to be monotone in one direction, then we can avoid scaling
and still obtain full connectivity, using $9$ glues, $O(1)$ tiles,
$O(\log n)$ stages, and $O(n)$ bins (Section~\ref{sec:monotone}). We
also discuss an efficient method for the design of binary counters
in the staged assembly framework, an important tool for a large
number of self-assembly systems(Section~\ref{sec:fastCounters}).
This technique offers benefits over non-staged counters in terms of reduced temperature ($\tau=1$) and potentially faster assembly.

\begin{table*}
\centering
\small
\tabcolsep=0.5\tabcolsep
\begin{tabular}{l|c|c|c|c|c|c|c|c}
\multicolumn{1}{c|}{\textbf{$n \times n$ square}} & \textbf{Glues} & \textbf{Tiles} & \textbf{Bins} & \textbf{Stages} & \textbf{$\tau$} & \textbf{Scale} & \textbf{Conn.} & \textbf{Planar}
\\ \hline
Previous work~\cite{Adleman-Cheng-Goel-Huang-2001,Rothemund-Winfree-2000} &
\multicolumn{2}{|c|}{$\Theta(\frac{\log n}{\log\log n})$} & $1$ & $1$ & $2$ & $1$ & full & yes
\\ \hline
Jigsaw technique (\S \ref{sec:jigsaw}) & $9$ & $O(1)$ & $O(1)$ & $O(\log n)$ & $1$ & $1$ & full & yes
\\ \hline
Crazy mixing (\S \ref{sec:square crazy}) & $16$ & $O(1)$ & $B  {}$ &
$O\big(\big\lceil\frac{\log n}{B^2}\big\rceil + \log B\big)$ & $2$ &
$1$ & full & yes
\\ \hline
Crazy mixing, $B=\sqrt{\log n}$ & $16$ & $O(1)$ & $\sqrt{\log n}$ & $O(\log\log n) $ & $2$ & $1$ & full & yes
\\ \hline

\multicolumn{1}{c}{}\\

\multicolumn{1}{c|}{\textbf{General shape with $n$ tiles}} & \textbf{Glues} & \textbf{Tiles} & \textbf{Bins} & \textbf{Stages} & \textbf{$\tau$} & \textbf{Scale} & \textbf{Conn.} & \textbf{Planar}
\\ \hline
Previous work \cite{Soloveichik-Winfree-2004} & \multicolumn{2}{|c|}{$\Theta(K/\log K)$} & $1$ & $1$ & $2$ & unbounded & partial & no
\\ \hline
Arbitrary shape with $n$ tiles (\S \ref{sec:spanning tree}) & $2$ & $16$ & $O(\log n)$ & $O(\diameter)$ & $1$ & $1$ & partial & no
\\ \hline
Hole-free shape with $n$ tiles (\S \ref{sec:factor 2}) & $8$ & $O(1)$ & $O(n)$ & $O(n)$ & $1$ & $2$ & full & no
\\ \hline
Simulation of $1$-stage tiles $T$ (\S \ref{sec:simulation}) & $3$ & $O(1)$ & $O(|T|)$ & $O(\log \log |T|)$ & $1$ & $O(\log |T|)$ & partial & no
\\ \hline
Monotone shapes with $n$ tiles (\S \ref{sec:monotone}) & $9$ & $O(1)$ & $O(n)$ &
$O(\log n)$ & $1$ & $1$ & full & yes
\\ \hline
\end{tabular}
\caption{Summary of the glue, tile, bin, and stage complexities, the
  temperature $\tau$, the scale factor, the connectivity, and the
  planarity of our staged assemblies and the relevant previous work.}
\label{table:summary}
\end{table*}

Table~\ref{table:summary} summarizes our results in more detail, in
particular elaborating on possible trade-offs between the
complexities. The table captures one additional aspect of our
constructions:  Planarity.  Consider two jigsaw puzzle pieces with
complex borders lying on a flat surface.  It may not be possible to
slide the two pieces together while both remain on the table.
Rather, one piece must be lifted off the table and dropped into
position.  Our current model of assembly intuitively permits
supertiles to be placed into position from the third dimension,
despite the fact that it may not be possible to assemble within the
plane.  A \emph{planar} construction guarantees assembly of the
final target shape even if we restrict assembly of supertiles to
remain completely within the plane.  This feature seems desirable,
though it may not be essential in two dimensions because reality
will always have some thickness in the third dimension (2.5D).
However, the planarity constraint (or \emph{spatiality} constraint
in 3D) becomes more crucial in 3D assemblies, \iffull where there is
no fourth dimension to avoid intersection, \fi so this feature gives
an indication of which methods might generalize to 3D; see
Section~\ref{sec:summary}.

\paragraph{Related Work}  There are a handful
of existing works in the field of DNA self-assembly that have
proposed very basic multiple stage assembly procedures.  John Reif
introduced a step-wise assembly model for local parallel
biomolecular computing~\cite{Reif-1999}.  In more recent work
Park~et.~al. have considered a simple hierarchical assembly
technique for the assembly of DNA
lattices~\cite{Park-Pistol-Ahn-Reif-Lebeck-Dwyer-Labean-2006}.
Somei~et.~al. have considered a microfluidic device for stepwise
assembly of DNA tiles~\cite{Somei-Kaneda-Fujii-Murata-2005}.  While
all of these works use some form of stepwise or staged assembly,
they do not study the complexity of staged assembly to the depth
that we do here. Further, none consider the concept of bin
complexity.

\section{The Staged Assembly Model}
\label{sec:model}
In this section, we present basic definitions common to most assembly
models, then we describe the staged assembly model, and finally we define
various metrics to measure the efficiency of a staged assembly system.
\vspace{-.1in}
\paragraph{Tiles and tile systems.} A \emph{(Wang) tile} $t$ is a unit square
defined by the ordered quadruple
$\langle \north(t), \allowbreak \east(t), \allowbreak \south(t), \allowbreak \west(t) \rangle$
of glues on the four edges of the tile.
Each \emph{glue} is taken from a finite alphabet $\Sigma$,
which includes a special ``null'' glue denoted $\nullglue$.
For simplicity of bounds, we do not count the $\nullglue$ glue
in the \emph{glue complexity} $g=|\Sigma|-1$.

A \emph{tile system} is an ordered triple $\langle T, G, \tau \rangle$
consisting of the \emph{tileset} $T$ (a set of distinct tiles),
the \emph{glue function} $G: \Sigma^2 \rightarrow \{0, 1, \dots, \tau\}$,
and the \emph{temperature}~$\tau$ (a positive integer).
It is assumed that $G(x, y) = G(y, x)$ for all $x, y \in \Sigma$ and that
$G(\nullglue,x) = 0$ for all $x \in \Sigma$.
Indeed, in all of our constructions, as in the
original model of Adleman \cite{Adleman-2000-TR},
$G(x,y) = 0$ for all $x \neq y$ (see footnote\footnote{With a typical implementation in DNA, glues actually attach to unique complements rather than to themselves.  However, this depiction of the glue function is standard in the literature and does not affect the power of the model.}),
and each $G(x,x) \in \{1, 2, \dots, \tau\}$.
The \emph{tile complexity} of the system is $|T|$.
\vspace{-.1in}
\paragraph{Configurations.}
Define a \emph{configuration} to be a function
$C: \mathbb{Z}^2 \rightarrow T \cup \{\emptytile\}$,
where $\emptytile$ is a special tile that has the $\nullglue$
glue on each of its four edges. The {\em shape} of a configuration $C$ is
the set of positions $(i, j)$ that do not map to the $\emptytile$ tile.
The shape of a configuration can be disconnected, corresponding to
several distinct supertiles.

\vspace{-.1in}
\paragraph{Adjacency graph and supertiles.}
Define the \emph{adjacency graph} $G_C$ of a configuration $C$ as
follows.  The vertices are coordinates $(i, j)$ such that
$C(i, j) \neq \emptytile$.  There is an edge between two vertices
$(x_1, y_1)$ and $(x_2, y_2)$ if and only if $|x_1-x_2|+ |y_1 - y_2| = 1$.
A \emph{supertile} is a maximal connected subset $G^{\prime}$ of~$G_C$,
i.e., $G^{\prime} \subseteq G_C$ such that, for every connected
subset~$H$, if $G^{\prime} \subseteq H \subseteq G_C$, then $H = G^{\prime}$.
For a supertile $S$, let $|S|$ denote the number of nonempty positions
(tiles) in the supertile.
Throughout this paper, we will informally refer to (lone) tiles as
a special case of supertiles.



If every two adjacent tiles in a supertile share a positive strength
glue type on abutting edges, the supertile is \emph{fully
connected}.


\vspace{-.1in}
\paragraph{Two-handed assembly and bins.}  Informally, in the
two-handed assembly model, any two supertiles may come together
(without rotation or flipping) and attach if their strength of
attachment, from the glue function, meets or exceeds a given
temperature parameter~$\tau$.

Formally, for any two supertiles $X$ and $Y$, the \emph{combination}
set $C^{\tau}_{(X,Y)}$ of $X$ and $Y$ is defined to be the set of
all supertiles obtainable by placing $X$ and $Y$ adjacent
to each other (without overlapping) such that, if we list each newly
coincident edge $e_i$ with edge strength $s_i$, then $\sum s_i \geq \tau$.

We define the assembly process in terms of bins. Intuitively, a bin
consists of an initial collection of supertiles that self-assemble
at temperature $\tau$ to produce a new set of supertiles $P$.
Formally, with respect to a given set of tile-types $T$, a
\emph{bin} is a pair $(S, \tau)$ where $S$ is a set of initial
supertiles whose tile-types are contained in $T$, and $\tau$ is a
temperature parameter.  For a bin $(S, \tau)$, the set of
\emph{produced} supertiles $P'_{(S,\tau)}$ is defined recursively as
follows:  (1)~$S \subseteq P'_{(S,\tau)}$ and (2)~for any $X,Y\in
P'_{(S,\tau)}$, $C^\tau_{(X,Y)} \subseteq P'_{(S,\tau)}$. The set of
\emph{terminally} produced supertiles of a bin $(S,\tau)$ is
$P_{(S,\tau)} = \{ X\in P' \mid Y\in P'$, $C^\tau_{(X,Y)} =
\emptyset \}$.  We say the set of supertiles $P$ is \emph{uniquely}
produced by bin $(S,\tau)$ if each supertile in $P'$ is of finite
size.  Put another way, unique production implies that every
producible supertile can grow into a supertile in $P$.

Intuitively, $P'$ represents the set of all possible supertiles that
can self-assemble from the initial set $S$, whereas $P$ represents
only the set of supertiles that cannot grow any further. In the case
of unique assembly of $P$, the latter thus represents the eventual,
final state of the self-assembly bin. Our goal is therefore to
produce bins that yield desired supertiles in the uniquely produced
set~$P$.

Given a collection of bins, we model the process of mixing bins
together in arbitrarily specified patterns in a sequence of distinct
stages.  In particular, we permit the following actions:  We can
\emph{create} a bin of a single tile type $t \in T$, we can
\emph{merge} multiple bins together into a single bin, and we can
\emph{split} the contents of a given bin into multiple new bins.  In
particular, when splitting the contents of a bin, we assume the
ability to extract only the unique terminally produced set of
supertiles $P$, while filtering out additional partial assemblies in
$P'$.  Intuitively, given enough time for assembly and a large
enough volume of tiles, a bin that uniquely produces $P$ should
consist of almost entirely the terminally produced set $P$.  We
formally model the concept of mixing bins in a sequence of stages
with the \emph{mix graph}.




\vspace{-.1in}
\paragraph{Mix graphs.}
An \emph{$r$-stage $b$-bin mix graph} $M$ consists of $r b+1$
vertices, $m_*$ and $m_{i,j}$ for $1\leq i \leq r$ and $1\leq j \leq
b$, and an arbitrary collection of edges of the form $(m_{r,j},m_*)$
or $(m_{i,j}, m_{i+1,k})$ for some $i, j, k$.
\vspace{-.1in}
\paragraph{Staged assembly systems.}
A \emph{staged assembly system} is a $3$-tuple $\langle M_{r,b}, \{T_{i,j}\},
\{\tau_{i,j}\} \rangle$ where $M_{r,b}$ is an $r$-stage $b$-bin mix
graph, each $T_{i,j}$ is a set of tile types, and each $\tau_{i,j}$
is an integer temperature parameter.  Given a staged assembly system,
for each $1\leq i \leq r$, $1\leq j \leq b$, we define a
corresponding bin $(R_{i,j}, \tau_{i,j})$ where $R_{i,j}$ is
defined as follows:
\begin{enumerate*}
\item $R_{1,j}= T_{1,j}$ (this is a bin in the first stage);
\item For $i\geq 2$,
  $\displaystyle R_{i,j}= \Big(\bigcup_{k:\ (m_{i-1,k},m_{i,j})\in M_{r,b}} P_{(R_{(i-1,k)},\tau_{i-1,k})}\Big) \cup T_{i,j}$.
\item $\displaystyle R_* =\bigcup_{k:\ (m_{r,k},m_{*})\in M_{r,b}} P_{(R_{(r,k)},\tau_{r,k)})}$.
\end{enumerate*}

Thus, the $j$th bin in the $i$th stage takes its initial set of seed
supertiles to be the terminally produced supertiles from a
collection of bins from the previous stage, the exact collection
specified by $M_{r,b}$, in addition to a set of added tile types
$T_{i,j}$. Intuitively, the mix graph specifies how each collection
of bins should be mixed together when transitioning from one stage
to the next.  We define the set of terminally produced supertiles
for a staged assembly system to be $P_{(R_{*}, \tau_*)}$.  In this
paper, we are interested in staged assembly systems for which each
bin yields unique assembly of terminal supertiles.  In this case we
say a staged assembly system uniquely produces the set of supertiles
$P_{(R_{*}, \tau_*)}$.

Throughout this paper, we assume that, for all $i,j$,
$\tau_{i,j} = \tau$ for some fixed global temperature~$\tau$,
and we denote a staged assembly system as
$\langle M_{r,b}, \{T_{i,j}\}, \tau \rangle$.

\iffull
\begin{figure}[h]
\centering
\includegraphics[scale=1.5]{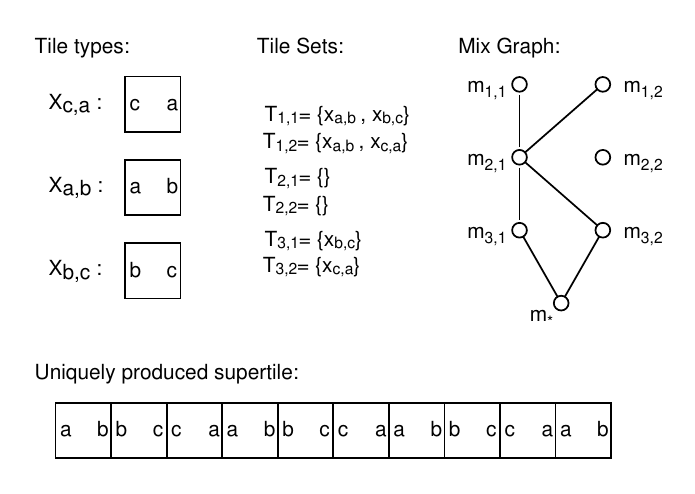}
\caption{A sample staged assembly system that uniquely assembles a
  $1\times 10$ line.  The temperature is $\tau=1$,
  and each glue $a,b,c$ has strength~$1$.
  The tile, stage, and bin complexities are~$3$, $3$, and $2$, respectively.}
\label{figure:StagedExample}
\end{figure}
\fi

\ifabstract
\begin{figure}
\centering
\begin{minipage}{0.48\textwidth}
\centering
\includegraphics[scale=1.1]{figs/StagedExample}
\caption{A sample staged assembly system that uniquely assembles a
  $1\times 10$ line.  The temperature is $\tau=1$,
  and each glue $a,b,c$ has strength~$1$.
  The tile complexity is~$3$, the stage complexity is~$3$,
  and the bin complexity is~$2$.}
\label{figure:StagedExample}
\end{minipage}\hfill
\begin{minipage}{0.48\textwidth}
\centering
\subfigure[\label{figure:ShiftingProblem}]
  {\includegraphics[scale=0.35]{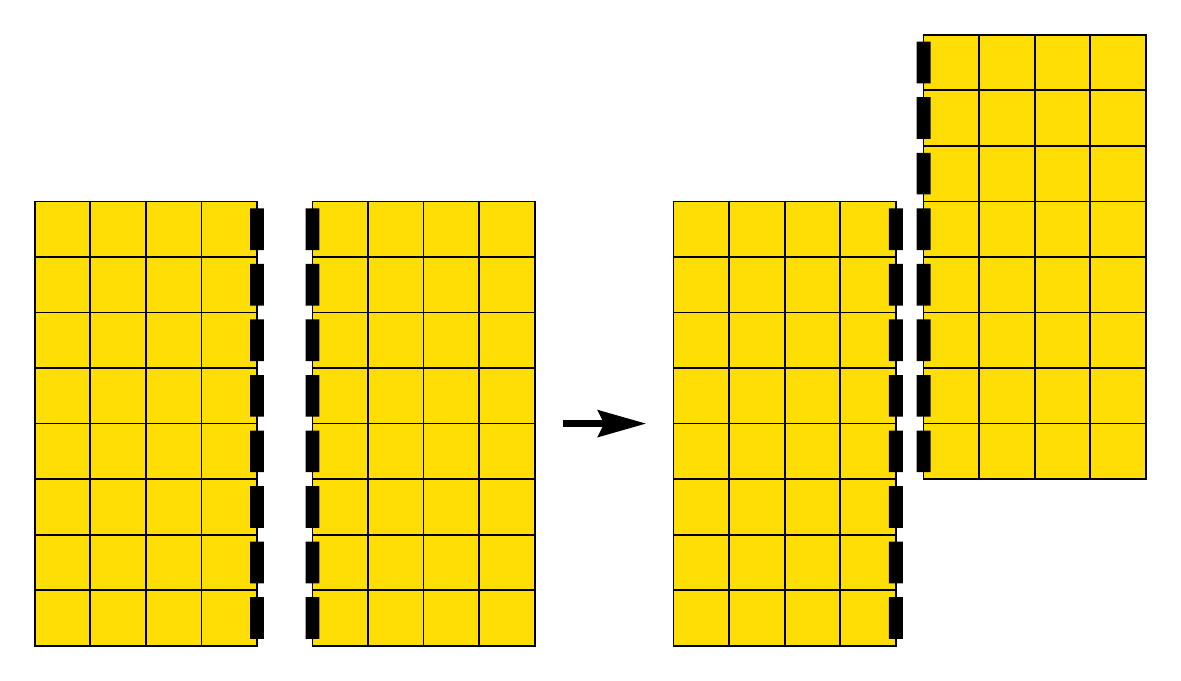}}\hfil\hfil
\subfigure[\label{figure:Jigsaw}]
  {\includegraphics[scale=0.35]{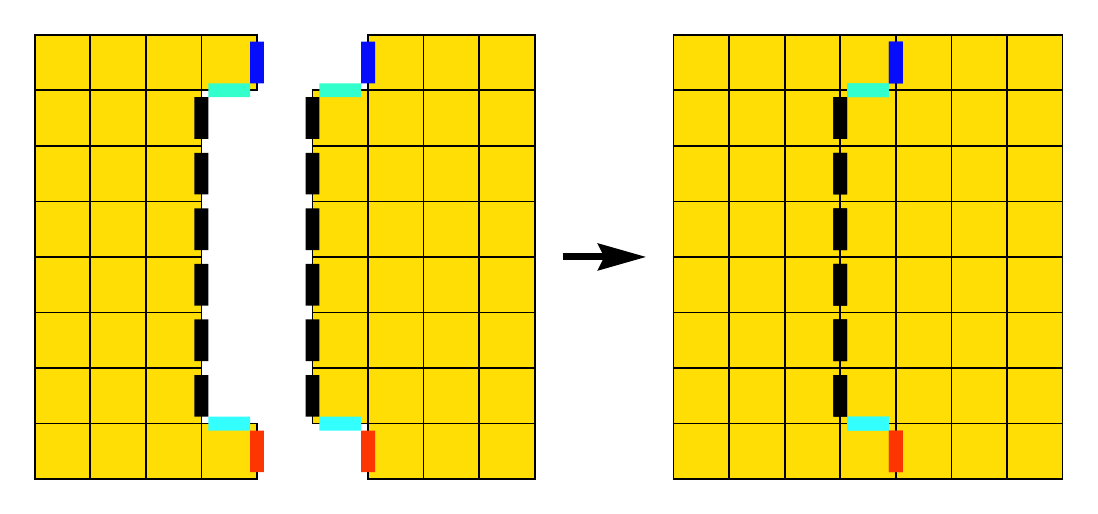}} \caption{(a)~The shifting
problem encountered when combining rectangle
  supertiles.
  (b)~The jigsaw solution: two supertiles that combine uniquely
  into a fully connected square supertile.}
\end{minipage}
\end{figure}
\fi

\iffull
\paragraph{Metrics.}
\label{sec:metrics}
We are interested in designing efficient staged assembly systems that
terminally produce a unique target shape.  We use the following natural
metrics to measure the efficiency of the staged tile system:
\begin{description*}
\item[Tile complexity:]  $|\bigcup T_{i,j}|$.  This
represents the number of distinct tile types that the assembly
system requires.  In this paper we emphasize $O(1)$ tile complexity
systems in contrast to previous work.
\item[Bin complexity:]  The number $b$ of vertices in each partition of the
mix graph.  Intuitively this measures the number of distinct containers that
would be required to carry out the specified staged assembly
procedure.
\item[Stage complexity:] The number $r$ of sequential stages of mixing
that occur.  This metric
measures the number of stages in which collections of bins must
be brought to their terminal assemblies and mixed together into a
new array of bins. It represents operator time.
\item[Temperature:]  The value $\tau$.
In practice, it is difficult to implement systems with
accurate temperature sensitivity.
In this paper we focus on $\tau \in \{1,2,3\}$.
\end{description*}
We also consider the following features
to measure the quality of the shape produced:
\begin{description*}
\item[Planarity:]
  In a \emph{planar} construction, supertiles have obstacle-free paths
  to reach their mates.
\item[Connectivity:]
  In a \emph{fully connected} supertile, every two adjacent tiles have
  the same positive-strength glue along their common edge.
  Otherwise the supertile is \emph{partially connected}.
\item[Scale factor:]
  In some cases, we allow the produced shape to a uniform scaling of
  the desired shape by some small positive integer,
  called the \emph{scale factor}.
\end{description*}

\begin{figure}[h]
\centering
\includegraphics[scale=1.35]{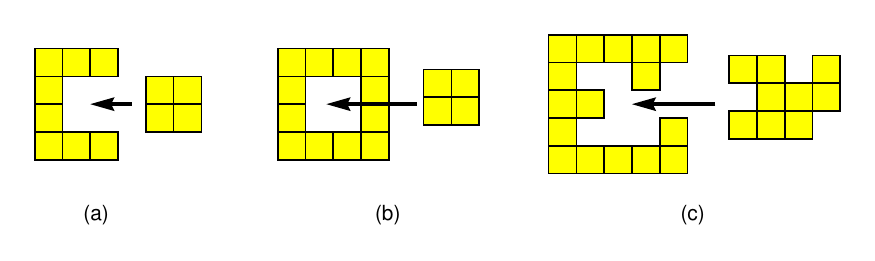}
\caption{All three assemblies are permitted under the basic model.
However, only assembly (a) is permitted under the planarity
constraint.} \label{figure:Planarity}
\end{figure}

\fi

\section{Assembly of $1\times n$ Lines}
\label{sec:line}

As a warmup, we develop a staged assembly for the $1\times n$ rectangle
(``line'') using only three glues and $O(\log n)$ stages.
\iffull

\fi
The assembly uses a divide-and-conquer approach
to split the shape into a constant number of recursive pieces.
Before we turn to the simple divide-and-conquer required here,
we describe the general case, which will be useful later.
This approach requires the pieces to be combinable in a
unique way, forcing the creation of the desired shape.
We consider the \emph{decomposition tree} formed by the recursion,
where sibling nodes should uniquely assemble to their parent.
The staging proceeds bottom-up in this tree.
The height of this tree corresponds to the stage complexity, and the maximum
number of distinct nodes at any level corresponds to the bin complexity.
The idea is to assign glues to the pieces in the decomposition tree
to guarantee unique assemblage while using few glues.

\iffull
We now turn to constructing $1 \times 2^k$ lines:
\fi

\begin{theorem} \label{powers of two}
  There is a planar temperature-$1$ staged assembly system that uniquely produces a
  (fully connected) $1 \times 2^k$ line using $3$ glues,
  $6$ tiles, $6$ bins, and $O(k)$ stages.
\end{theorem}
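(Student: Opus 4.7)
The plan is to realize a divide-and-conquer recursion tree of depth~$k$, in which the $1 \times 2^k$ line is decomposed into two $1 \times 2^{k-1}$ half-lines that combine along their shared edge, each half-line being built recursively the same way. The staging proceeds bottom-up in this tree: stage~$i$ produces length-$2^{i-1}$ half-line supertiles in a few bins, and stage~$i{+}1$ combines these pairwise (via the pouring operation on the mix graph) into length-$2^i$ supertiles. Since each stage doubles the length, the total stage complexity is $O(k)$.

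The central design issue is to guarantee that in each bin exactly the intended pair of sibling half-lines combines, without chaining into triples or longer objects and without interference from supertiles produced at other levels. My plan is to equip every half-line with $\nullglue$ on its ``outer'' end (the end that will eventually be exposed in the final line) and a non-null glue on its ``inner'' end (where it meets its sibling). Two copies of the same half-line then cannot self-bind, since the outer end is inert, so only the matched left/right pair can combine. To prevent cross-stage interaction while using only~$3$ glues, I would cyclically rotate the inner glue through $\{a,b,c\}$ as stages progress, so that half-lines from different stages carry mismatched inner glues and cannot combine even when residual singleton tiles or older supertiles linger.

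The~$6$ tiles are then a small set of end caps and transition tiles realizing all glue patterns needed across the three phases of the rotation, for instance $(\nullglue,a)$, $(a,\nullglue)$, $(a,b)$, $(b,c)$, $(c,a)$, and one more to close the cycle; the~$6$ bins accommodate the ``left'' and ``right'' flavors of half-line for each of the three rotating inner glues. The main obstacle is verifying, by case analysis at each stage, that every bin terminally produces only the intended half-line -- no unwanted pair of supertiles in the bin has matching abutting glues, and the new end glues of the combined supertile are exactly those prescribed for the next phase of the rotation. Once this bin-wise uniqueness is established, an induction from the base case (a two-tile half-line assembled in stage~$1$ from an end-cap tile and a transition tile) yields the $1 \times 2^k$ line after $O(k)$ stages. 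Planarity is immediate because the entire assembly lives in one dimension, so no supertile ever needs to be lifted out of the line to reach its mating site.
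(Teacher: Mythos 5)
Your high-level plan (divide-and-conquer on the decomposition tree, bottom-up staging, $O(k)$ stages, planarity for free in one dimension) matches the paper, but the glue design at the heart of your argument has a genuine flaw: if every half-line carries $\nullglue$ on its outer end and a non-null glue only on its inner end, then the supertile obtained by joining a left half $(\nullglue, g)$ with a right half $(g, \nullglue)$ has $\nullglue$ on \emph{both} of its ends. That product is inert and cannot serve as a half-line at the next level of the recursion, so the induction cannot continue past a single doubling. Rotating the inner glue through $\{a,b,c\}$ across stages does not repair this, because the obstacle is not cross-stage interference but the fact that the combined piece has no exposed glue left with which to meet its own sibling. (Cross-stage contamination is in any case not a concern in this model: each bin's input is exactly the set of terminally produced supertiles of its designated predecessor bins, with partial assemblies filtered out, so ``residual'' tiles from older stages never appear uninvited.)

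The fix---and what the paper does---is to give every intermediate line \emph{two} non-null end glues $a \neq b$ and to put the third glue $c \notin \{a,b\}$ on the two newly created inner edges. Uniqueness of the pairwise combination then comes not from an inert outer end but from the fact that, among the exposed vertical edges of the two halves $(a,c)$ and $(c,b)$, the only east/west pair with equal glues is the intended $c$--$c$ seam, and the product $(a,b)$ cannot grow further in that bin. The invariant $a \neq b$ is preserved down the recursion, so three glues suffice at every level. This also explains the constants in the statement: the six tiles and six bins correspond to the six ordered pairs of distinct end glues drawn from $\{a,b,c\}$ (one bin per line type per level), not to left/right flavors of a single rotating glue.
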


\begin{proof}
  The decomposition tree simply splits a $1 \times 2^k$ line into
  two $1 \times 2^{k-1}$ lines.
  All tiles have the $\nullglue$ glue on their top and bottom edges.
  If the $1 \times 2^k$ line has glue $a$ on its left edge,
  and glue $b$ on its right edge, then the left and right $1 \times 2^{k-1}$
  inherit these glues on their left and right edges, respectively.
  We label the remaining two inner edges---the right edge of the left piece
  and the left edge of the right piece---with a third glue~$c$,
  distinct from $a$ and~$b$.
  Because $a \neq b$, the left and right piece uniquely attach at the
  inner edges with common glue~$c$.
  This recursion also maintains the invariant that $a \neq b$,
  so three glues suffice overall.
  Thus there are only ${3 \choose 2} = 6$ possible $1 \times 2^k$ lines of
  interest, and we only need to store these six at any time, using six bins.
  At the base case of $k=0$, we just create the six possible single tiles. 
  The number of stages beyond that creation is exactly~$k$.
\end{proof}

\iffull
\begin{figure}[h]
\centering
\includegraphics[scale=.50]{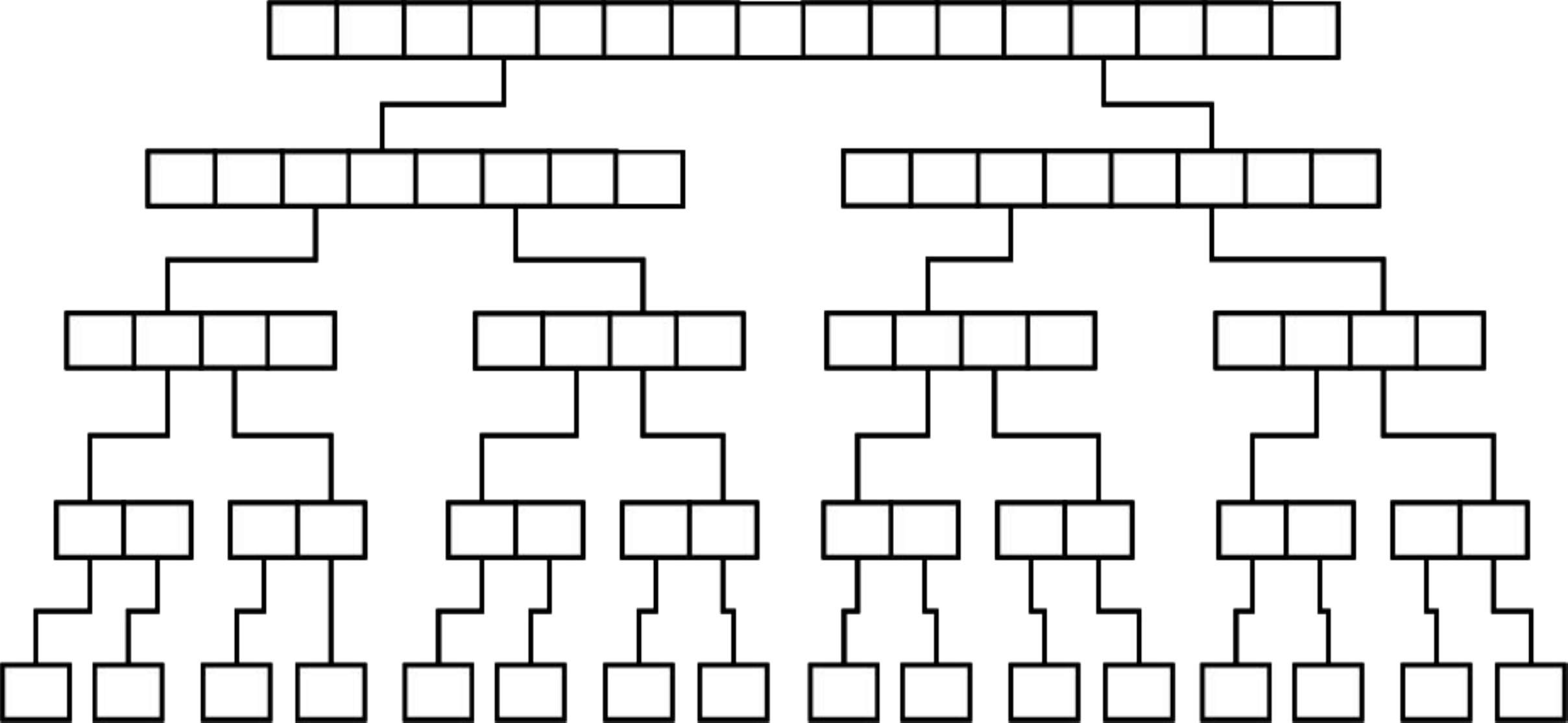}
\caption{Decomposition tree for $1 \times 16$ line.}
\label{figure:DecompositionTree1xN}
\end{figure}
\fi

\begin{corollary}
  There is a planar temperature-$1$ staged assembly system that uniquely produces a
  (fully connected) $1 \times n$ line using $3$ glues,
  $6$ tiles, $7$ bins, and $O(\log n)$ stages. 
\end{corollary}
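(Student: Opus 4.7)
The plan is to process the binary representation of $n$ using the doubling procedure of Theorem~\ref{powers of two}, supplemented by single-tile appends whenever a bit of $n$ is set. I would write $n = \sum_{i=0}^{K} b_i 2^i$ with $b_K = 1$ and $K = O(\log n)$, and process the bits from $b_{K-1}$ down to $b_0$.

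Starting from six bins holding the six single tiles of Theorem~\ref{powers of two} (one per distinct ordered glue pair over $\{a,b,c\}$, representing lines of length $1$), I would iterate through the bits. For each bit, I would first perform one doubling stage exactly as in Theorem~\ref{powers of two}, so the line in each bin grows from length $L$ to $2L$. If additionally the current bit equals $1$, I would perform one more stage that appends a single tile to the right end of the line in each bin: in the bin holding an $(x,y)$-ended line I would add a tile of type $(y,y')$, where $y'$ is the third glue distinct from both $x$ and $y$, turning the line into an $(x,y')$-ended line of length $2L+1$. The induced map $(x,y)\mapsto(x,y')$ is a permutation of the six distinct ordered glue pairs, so the six bins continue to hold all six flavors after each append. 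A seventh bin is reserved for extracting the final $1\times n$ line of the desired flavor.

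Since each bit contributes at most two stages, the total stage complexity is $O(K)=O(\log n)$, while the tile complexity ($6$), glue complexity ($3$), and planarity are inherited directly from Theorem~\ref{powers of two}, and the bin complexity is $6+1=7$. The main obstacle I expect is verifying that each appending stage uniquely terminally produces its intended $(x,y')$-line: the bin will contain the $(x,y)$-line of length $2L$ together with many free $(y,y')$-tiles, and I need to rule out alternative terminal supertiles. This reduces to a short case analysis based on the three glues being pairwise distinct in the relevant places---the $(x,y')$-line's exposed glues $x$ and $y'$ do not compatibly meet the $(y,y')$-tile's ends $y$ and $y'$, and two $(y,y')$-tiles do not attach to each other since $y\neq y'$---so the only terminal supertile in each bin after the append is the intended line.
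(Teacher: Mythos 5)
Your proposal is correct, but it takes a genuinely different route from the paper. The paper leaves the doubling construction of Theorem~\ref{powers of two} completely intact and, whenever the $1\times 2^i$ lines are built for a position $i$ where $n$ has a $1$ bit, pours one such line into a dedicated seventh output bin; the $O(\log n)$ powers of two then concatenate there in a forced order, again using the three-glue alternation to make each successive attachment unique. You instead run a double-and-add (Horner) scheme over the bits of $n$ from most significant to least: each set bit triggers an extra stage that appends a single $(y,y')$ tile to the $(x,y)$-line in each working bin, and the induced relabeling $(x,y)\mapsto(x,y')$ permutes the six flavors so the invariant of Theorem~\ref{powers of two} is preserved. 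Your local case analysis for the append stage is right (the only positive-strength match among the two species present is the intended $y$--$y$ junction, and the resulting $(x,y')$-line is terminal), and in fact your uniqueness argument is arguably cleaner than the paper's, since every bin at every stage contains only two species, one of which is a singleton tile, whereas the paper's output bin must be argued to assemble $O(\log n)$ distinct-length pieces in a unique order as they arrive over successive stages. The trade-off is that your working bins no longer hold powers of two (their common length drifts to an arbitrary prefix of $n$ in binary), and you spend up to two stages per bit instead of one; both approaches meet all the stated bounds of $3$ glues, $6$ tiles, $7$ bins (your seventh bin is not even strictly necessary, since the target already sits in one of the six working bins), and $O(\log n)$ stages.
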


\begin{proof}
  We augment the construction of Theorem~\ref{powers of two}
  applied to $k = \lfloor \log n \rfloor$.
  When we build the $1 \times 2^i$ lines for some~$i$,
  if the binary representation of $n$ has a $1$ bit in the $i$th position,
  then we add that line to a new output bin.
  Thus, in the output bin, we accumulate powers of $2$ that sum to~$n$.
  As in the proof of Theorem~\ref{powers of two},
  three glues suffice to guarantee unique assemblage in the output bin.
  The number of stages remains $O(\log n)$.
\end{proof}

\xxx{work out the constants if we assemble an $n \times n$ as if it were
a comb, and add theorem and entry to table}

\section{Assembly of $n \times n$ Squares}
\label{sec:square}

Figure~\ref{figure:ShiftingProblem} illustrates the challenge with generalizing
the decomposition-tree technique from $1 \times n$ lines to $n \times n$
squares.
Namely, the na\"{\i}ve decomposition of a square into two $n \times n/2$
rectangles cannot lead to a unique assembly using $O(1)$ glues with
temperature~$1$ and full connectivity: by the pigeon-hole principle,
some glue must be used more than once along the shared side of length~$n$,
and the lower part of the left piece may glue to the higher part
of the right piece.  Even though this incorrect alignment may make two unequal
glues adjacent, in the temperature-$1$ model, a single matching pair of glues
is enough for a possible assembly.

\iffull
\begin{figure}[h]
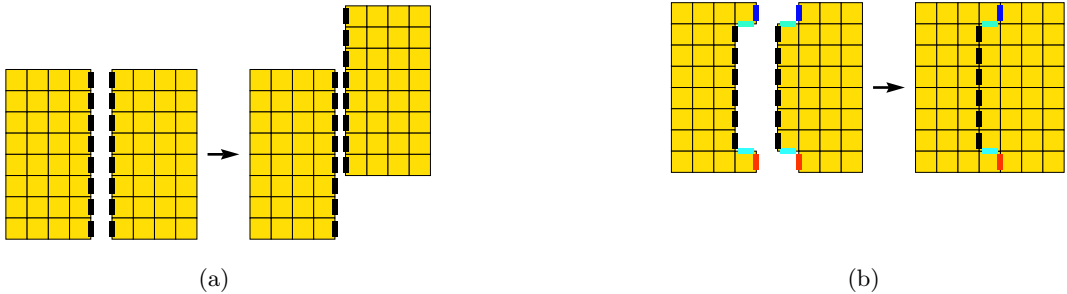

\centering
\subfigure[\label{figure:ShiftingProblem}]
  {\vbox{\hbox{\includegraphics[scale=0.5]{figs/shifting_problem}}\vspace{-0.15in}}}\hfil\hfil
\subfigure[\label{figure:Jigsaw}]
  {\vbox{\hbox{\includegraphics[scale=0.5]{figs/jigsaw}}\vspace{0.2in}}}
\caption{(a)~The shifting problem encountered when combining rectangle
  supertiles.
  (b)~The jigsaw solution: two supertiles that combine uniquely
  into a fully connected square supertile.}
\end{figure}
\fi

\subsection{Jigsaw Technique}
\label{sec:jigsaw}

To overcome this shifting problem, we introduce the \emph{jigsaw technique},
a powerful tool used throughout this paper.
This technique ensures that the two supertiles glue together uniquely
based on geometry instead of glues.
Figure~\ref{figure:Jigsaw} shows how to cut a square supertile into
two supertiles with three different glues
that force unique combination while preserving full connectivity.

\begin{theorem}
  There is a planar temperature-$1$ staged assembly of a fully connected
  $n \times n$ square using $9$ glues, $O(1)$ tiles, $O(1)$ bins, and
  $O(\log n)$ stages.
\end{theorem}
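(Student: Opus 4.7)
The plan is to apply the divide-and-conquer scheme from Theorem~\ref{powers of two}, but replace each straight cut with a jigsaw cut so that geometry, rather than glue distinctness, forces unique combination. First I would handle the case $n = 2^k$ by recursively decomposing an $n \times n$ square into two $n/2 \times n$ rectangles, then each such rectangle into two $n/2 \times n/2$ squares, and so on, alternating between horizontal and vertical jigsaw cuts of the kind shown in Figure~\ref{figure:Jigsaw}. Because the jigsaw interface can only register in one position when the two mating pieces are brought together, at every level of the decomposition tree the unique assembly of the parent supertile from its two children is guaranteed at temperature~$1$ with only a constant number of glues along the interface.

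The second step is to argue the four complexity bounds. The recursion has depth $\Theta(\log n)$, and the stages proceed bottom-up, yielding $O(\log n)$ stage complexity. The bin complexity is $O(1)$ because at each stage there are only a constant number of distinct supertile ``types'' to store (the current level's square-ish and rectangular shapes, in their two orientations, together with any leftover single tiles needed in the next stage). For the glue budget, I would reuse glues across levels of the recursion: the three jigsaw interface glues for horizontal cuts and three for vertical cuts, plus a small constant number of auxiliary glues (e.g., to seed or to mark boundaries). A careful counting in the spirit of the $1 \times n$ proof, where only three glues sufficed across all levels via an $a \neq b$ invariant, should keep the total at~$9$, and the distinct tile types induced by these glues will be $O(1)$.

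For arbitrary $n$, I would combine the power-of-two construction with a padding step. Writing $n$ in binary, I first build the largest $2^k \times 2^k$ square with $2^k \leq n$, then attach a horizontal strip of dimensions $2^k \times (n-2^k)$ on the right and a vertical strip of dimensions $n \times (n - 2^k)$ on the top (using an analogous jigsaw attachment so they register uniquely). The strips are themselves built by the same recursive idea (or from the $1 \times n$ line of Section~\ref{sec:line} combined with a vertical copy), still in $O(\log n)$ stages, $O(1)$ bins, and with the same global glue palette.

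The main obstacle I expect is the careful glue accounting: the jigsaw-based assembly must be done so that during any stage the only compatible pairs in a bin are the intended ones, even when glues are reused across many levels. The jigsaw geometry removes the ``shifting'' worry that motivated the technique, but spurious assemblies between noncomplementary siblings from different levels that happen to share a glue must also be ruled out. I would address this by arranging the recursion so that each bin at a given stage contains only supertiles of the current level's shape, and by choosing the three jigsaw glues (per orientation) following the $a,b,c$ pattern of Theorem~\ref{powers of two} so that the ``outer'' and ``inner'' glues of a supertile are always distinct, preserving an invariant that guarantees unique mating throughout the $O(\log n)$ stages.
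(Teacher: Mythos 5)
Your core idea---jigsaw cuts in a balanced decomposition tree, with glues assigned as in the $1\times n$ line so that each label becomes a triple of interface glues---is the same engine the paper uses, and your stage and bin counting for the power-of-two case is in the right spirit. The genuine gap is in your handling of general $n$. The paper never restricts to powers of two: it splits an $m$-column supertile at column $\lfloor (m+1)/2\rfloor$ for arbitrary $m$ (doing all vertical cuts first, down to width at most $3$, then all horizontal cuts), and the $O(1)$ bin bound comes from observing that the ``middle'' pieces at each level of this tree have identical shape up to two sizes (even versus odd column count). Your padding step does not close the gap it opens: the leftover regions $2^k \times (n-2^k)$ and $n \times (n-2^k)$ are not thin strips---$n-2^k$ can be as large as about $n/2$---so they are arbitrary rectangles whose construction is essentially the original problem, and your exact-halving recursion does not apply to them. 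Worse, the parenthetical fallback of building them from $1\times m$ lines stacked with ``a vertical copy'' is precisely the shifting problem of Figure~\ref{figure:ShiftingProblem} that the jigsaw technique exists to avoid: at temperature $1$ with $O(1)$ glues, two long straight edges cannot be forced to register uniquely. Once you allow the floor-based split for arbitrary $m$, the entire power-of-two-plus-padding detour becomes unnecessary.

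Two smaller points. First, your glue accounting (``three jigsaw glues per cut orientation plus auxiliaries'') does not obviously reach a consistent total of $9$: the $a\neq b$ invariant of Theorem~\ref{powers of two} requires \emph{three distinguishable interface labels} per cut direction, and each label is a \emph{triple} of glues in the jigsaw setting, giving $3\times 3 = 9$; the paper reuses these across the vertical and horizontal phases because they occur on edges of different orientations. Second, your alternating-orientation recursion forces a supertile to carry pending jigsaw boundaries in both directions simultaneously, which makes both the glue reuse and the ``constantly many supertile types per level'' claim require a separate verification that the paper's two-phase ordering sidesteps.
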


\begin{proof}
  We build a decomposition tree by first decomposing the $n \times n$ square
  by vertical cuts, until we obtain tall, thin supertiles; then we similarly
  decompose these tall, thin supertiles by horizontal cuts, until we obtain
  constant-size supertiles.
  Table~\ref{table:DecompositionAlgorithm} describes the general algorithm.
  Figure~\ref{figure:DecompositionTree} shows the decomposition tree
  for an $8 \times 8$ square.
  The height of the decomposition tree, and hence the stage complexity,
  is $O(\log n)$.

\begin{table}[h]
\centering
\small
\fbox{\begin{minipage}{5.8in}
\smallskip
\textbf{Algorithm} DecomposeVertically (supertile $S$): \\
\hbox to \hsize{\hfil --- Here $S$ is a supertile with $n$ rows and $m$ columns; $S$ is not necessarily a rectangle.}
\begin{enumerate*}
\parindent=1.5em
\item \textbf{Stop vertical partitioning when width is small enough:}

  If $m \leq 3$, DecomposeHorizontally$(S)$ and return.

\item \textbf{Find the column along which the supertile is to be partitioned:}

  Let $i := \lfloor (m + 1)/2 \rfloor$.

  Divide supertile $S$ along the $i$th column
  into a left supertile $S_1$ and right supertile $S_2$ such that

  tiles at position $(1, i)$ and $(n, i)$
  belong to $S_1$ and the rest of the $i$th column belongs to~$S_2$.

\item \textbf{Now decompose recursively:}

  DecomposeVertically $(S_1)$

  DecomposeVertically $(S_2)$
\end{enumerate*}
\smallskip
\end{minipage}}
\caption{Algorithm for vertical decomposition.
         (Horizontal decomposition is symmetric.)}
\label{table:DecompositionAlgorithm}
\end{table}

\begin{figure}[h]
  \centering
  \includegraphics[scale=0.35]{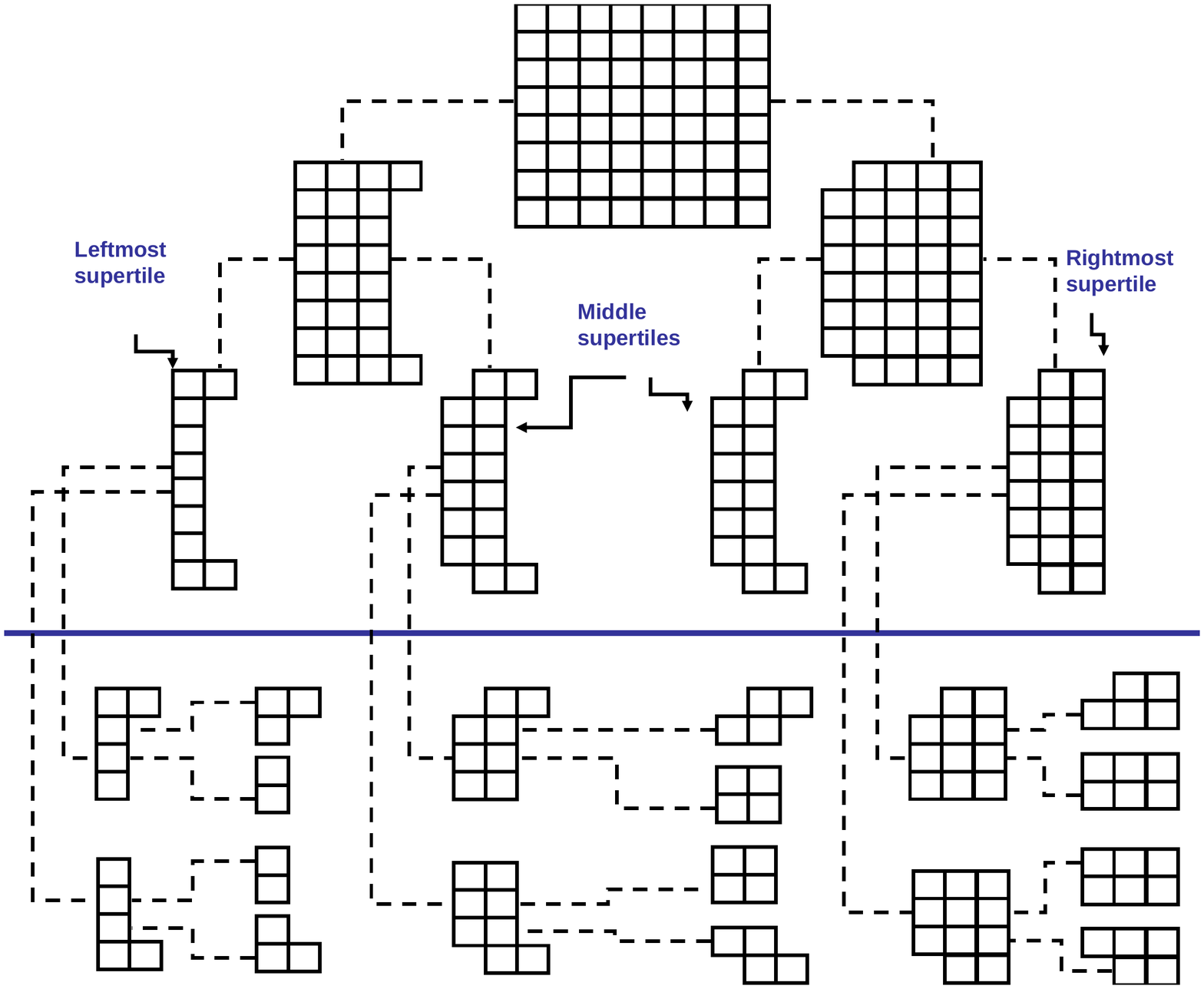}
  \caption{Decomposition tree for $8 \times 8$ square in the jigsaw technique.}
  \label{figure:DecompositionTree}
\end{figure}

We assign glue types to the boundaries of the supertiles to guarantee
unique assemblage based on the jigsaw technique.
The assignment algorithm is similar to the $1 \times n$ line,
but we use three glues for the boundary of each supertile instead of one,
for a total of nine glues instead of three.
\iffull
Figure~\ref{figure:GlueAssignment} shows the glue assignment
during the first two vertical decompositions of the $8 \times 8$ square.
\fi

\iffull
\begin{figure}[h]
  \centering
  \includegraphics[scale=.50]{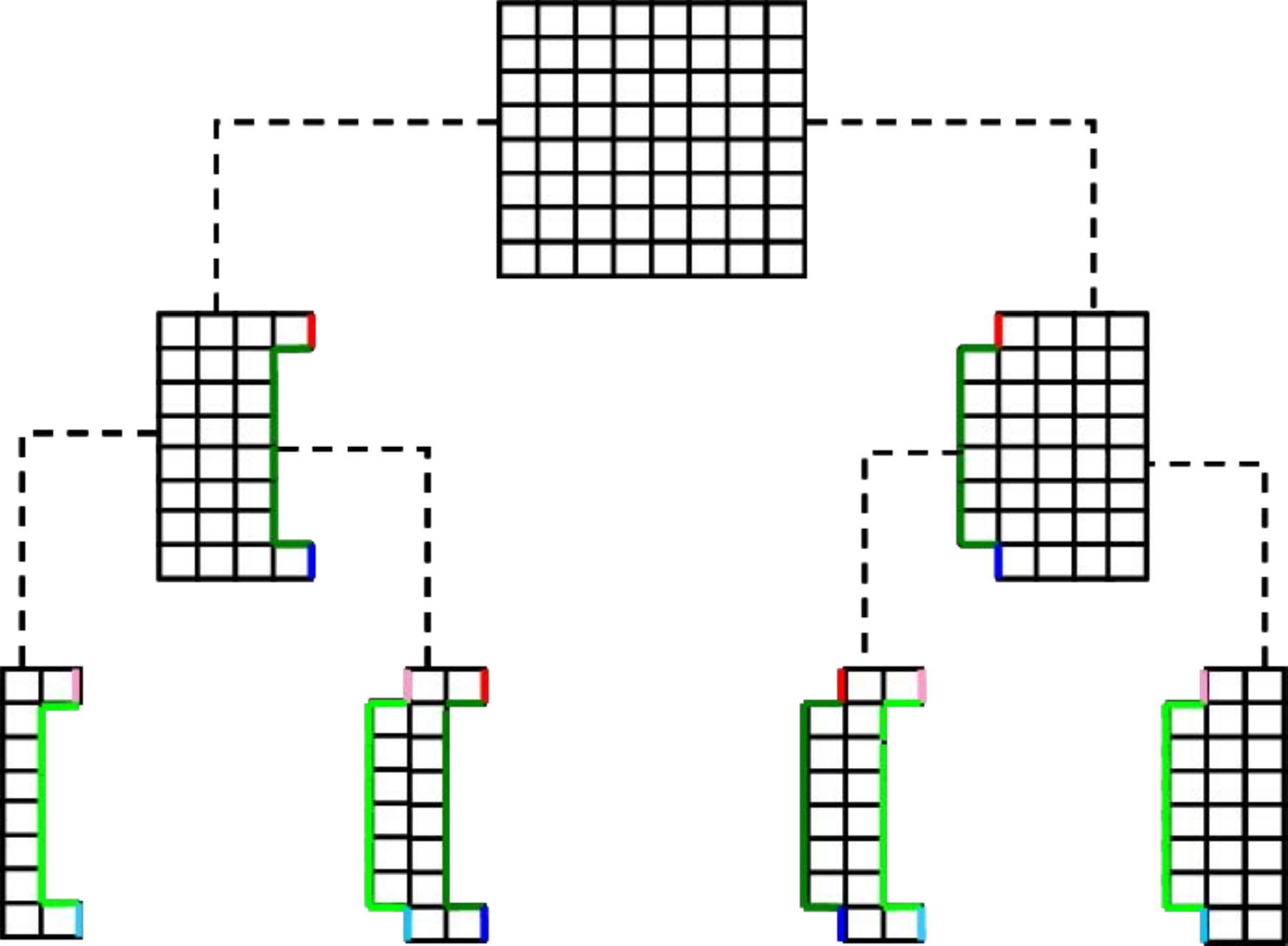}
  \caption{Assigning glues in the first two vertical decompositions of the
           jigsaw technique.}
  \label{figure:GlueAssignment}
\end{figure}
\fi

It remains to show that the bin complexity is $O(1)$.
We start by considering the vertical decomposition.
At each level of the decomposition tree, there are three types of intermediate
products: leftmost supertile, rightmost supertile and middle supertiles.  The
leftmost and rightmost supertiles are always in different bins. The important
thing to observe is that the middle supertiles always have the same shape,
though it is possible to have two different sizes---the number of columns can
differ by one.  In one of these sizes, the number of columns is even and,
in the other, the number is odd.  Thus we need separate bins for the even-
and odd-columned middle supertiles.  For each of the even- or odd-columned
supertiles, each of left and right boundaries of the supertile can have three
choices for the glue types.  Therefore, there is a constant number of different
types of middle supertiles at each level of the decomposition tree.  Thus, for
vertical decomposition, we need $O(1)$ bins.  Each of the supertiles at the end
of vertical decomposition undergoes horizontal decomposition.
A similar argument applies to the horizontal decomposition as well.
Therefore, the number of bins required is $O(1)$.
\end{proof}

\subsection{Crazy Mixing}
\label{sec:square crazy}

For each stage of a mix graph on $B$ bins, there are up to $\Theta(B^2)$ edges that can be included in the mix graph.  By picking which of these edges are included in each stage, $\Theta(B^2)$ bits of information can be encoded into the mix graph per stage.  The large amount of information
that can be encoded in the mixing pattern of a stage permits a very
efficient trade-off between bin complexity and stage complexity.  In
this section, we consider the complexity of this trade-off in the
context of building $n\times n$ squares.

It is possible to view a tile system as a compressed encoding of the
shape it assembles.  Thus, information theoretic lower bounds for
the descriptional or Kolmogorov complexity of the shape assembled
can be applied to aspects of the tile system.  From this we obtain
the following lower bound:


\begin{theorem}\label{thm:crazylower}
  Any staged assembly system with a fixed temperature and bin complexity
  $B$ that uniquely assembles an $n\times n$ square with $O(1)$ tile
  complexity must have stage complexity $\Omega(\frac{\log n}{B^2})$
  for almost all~$n$.
\end{theorem}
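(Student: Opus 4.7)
The plan is to prove this via an information-theoretic counting argument: bound the total number of distinct staged assembly systems of the given parameters, and observe that different values of $n$ require different systems since they assemble different shapes.

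First I would count the number of distinct staged assembly systems with fixed temperature $\tau$, bin complexity at most $B$, stage complexity at most $r$, and tile complexity at most some constant $c$. The system is specified by (i) the mix graph $M_{r,B}$, which is a subset of the at most $(r-1)B^2 + B$ possible edges (inter-stage edges plus final edges to $m_*$), giving at most $2^{O(B^2 r)}$ possibilities; (ii) the tile sets $T_{i,j}$ at each bin in each stage, each a subset of the at most $c^4\cdot (\tau+1)^{\binom{c^4}{2}}$ possible Wang tiles over a fixed constant alphabet (since $O(1)$ tiles means only constantly many glues appear), giving only $2^{O(Br)}$ further possibilities. The total number of such systems is thus at most $N(B,r) = 2^{O(B^2 r)}$.

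Next, since a staged assembly system uniquely produces a single terminal shape (the $n\times n$ square in question), two systems that assemble squares of different sizes must themselves be distinct. Hence the map $n \mapsto (\text{system for } n \times n)$ from the set $\{1,2,\dots,M\}$ into the set of valid systems with parameters $(B,r)$ is injective whenever every such $n$ is assembled within those parameters. By the pigeonhole principle, among the first $M$ values of $n$, at most $N(B,r)$ can be assembled using at most $r$ stages with bin complexity $B$. If $r \le \alpha \log n / B^2$ for a sufficiently small constant $\alpha > 0$, then $N(B,r) = 2^{O(B^2 r)} \le n^{1/2}$, so only a vanishing fraction of integers $n \le M$ can be realized within these parameters; for all other $n$, one must have $r = \Omega(\log n / B^2)$.

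The main obstacle is bookkeeping the count of systems carefully: one must verify that $O(1)$ tile complexity really does limit the glue alphabet to constant size (so that the per-bin tile-subset choices contribute only $2^{O(Br)}$, dominated by the mix-graph term $2^{O(B^2 r)}$), and that the temperature being fixed indeed makes the glue function choice a constant rather than something depending on $n$. A minor subtlety is the phrase ``for almost all $n$'': the counting argument shows the set of ``cheap'' $n$ has density zero in any sufficiently long initial segment, which is exactly the standard Kolmogorov-style statement, and this is how I would phrase the final conclusion.
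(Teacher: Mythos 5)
Your proposal is correct and is essentially the paper's argument: both hinge on encoding the entire system (mix graph plus tile sets plus glue function) in $O(rB^2)$ bits, observing that this encoding determines $n$, and concluding by pigeonhole that most integers $n$ cannot be described so briefly. The only difference is presentational --- the paper routes the counting through Kolmogorov complexity (a fixed machine decodes the system into $n$, so the encoding length must be at least $K_U(n) \geq (1-\epsilon)\log n$ for almost all $n$), whereas you perform the equivalent direct count of systems and apply pigeonhole explicitly; the substance is the same.
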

\begin{proof}
The Kolmogorov complexity of an integer $n$ with respect to a
universal Turing machine $U$ is $K_U (n) = $ min$|p|$ s.t $U(p) =
b_n$ where $b_n$ is the binary representation of $n$.  A
straightforward application of the pigeonhole principle yields that
$K_U (n) \geq \lceil \log n \rceil - \Delta$ for at least $1-
(\frac{1}{2})^{\Delta}$ of all $n$ (see~\cite{Li:1997:IKC} for
results on Kolmogorov complexity).  Thus, for any $\epsilon > 0$,
$K_U (n) \geq (1- \epsilon)\log n = \Omega (\log n)$ for almost all
$n$.

There exists a fixed size Turing machine that takes as input a
staged assembly system and outputs the maximum length of the
uniquely assembled shape of the system, if there is one.  Such a
machine that takes as input a system $S=\langle M_{r,b},
\{T_{i,j}\}, \tau \rangle$ that uniquely assembles an $n\times n$
square will output the integer $n$, and therefore must have size at
least $K_U (n)$.  Therefore, an encoding of $S$ into bits must have
size at least $\Omega(\log n)$ for almost all $n$. But, for a
constant bounded $\tau$ and $|T| = O(1)$, we can encode
$\{T_{i,j}\}$ and $\tau$ in $O(rb)$ bits and $M_{r,b}$ in $O(rb^2)$
bits for a total $O(rb^2)$ length encoding. Thus, for some constants
$c_1$ and $c_2$ we know that for almost all $n$, $c_1 rb^2 \geq c_2\log n$, which yields, $r \geq \frac{c_2 \log n}{c_1 b^2}$.

\end{proof}

Our upper bound achieves a stage complexity that is within a $O(\log
B)$ additive factor of this lower bound:

\begin{theorem}\label{thm:crazyupper}
  For any $n$ and $B$, there is a temperature-$2$ fully connected
  staged assembly of an $n\times n$ square using $16$ glues,
  $O(1)$ tiles, $B$ bins, and $O(\frac{\log n}{B^2} + \log B)$ stages.
\end{theorem}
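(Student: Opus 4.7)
The plan is to reduce the problem to building a $1 \times n$ seed structure and then fill the square at temperature~$2$. Specifically, I would build a $1 \times n$ horizontal line and a $1 \times n$ vertical line via crazy mixing, attach them into an L-shaped seed with distinctive boundary glues on the inner sides, and then pour in a constant-size set of filler tiles. Using temperature-$2$ cooperative binding (each filler tile attaches only where it matches both a south and a west neighbor), the filler deterministically completes the $n \times n$ square. This completion adds only $O(1)$ stages to whatever it takes to build the lines, so the entire challenge collapses to constructing a $1 \times n$ line within $O(\log n / B^2 + \log B)$ stages using $B$ bins.

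For the line I would split the work into a setup phase and a main phase. In the setup phase ($O(\log B)$ stages) I would run the doubling construction of Theorem~\ref{powers of two} in parallel across the $B$ bins to produce a library of uniquely-glued lines of $B$ carefully chosen lengths — specifically, the lengths that can appear as ``digits'' of $n$ when $n$ is written in a representation whose digits each fit into $B^2$ bits. Using the jigsaw idea from Section~\ref{sec:jigsaw}, each library piece carries distinct left and right interface glues drawn from a constant palette, so that later stages can unambiguously dictate which combinations assemble.

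In the main phase I write $n$ in base $2^{B^2}$, giving a representation of length $O(\log n / B^2)$, and devote one stage per digit. Since a single stage has up to $B^2$ possible directed bin-to-bin edges in its mix graph and each edge is independently present or absent, a single stage can encode $B^2$ bits — exactly enough to select one digit's worth of combining pattern. Each bin in the new stage thereby receives a designated subset of the $B$ previous bins, and, by the glue design, deterministically terminally-produces a specific longer line. After $O(\log n / B^2)$ such stages, a designated bin holds the complete $1 \times n$ line, which is then fed into the seed-construction and square-filling stages described above.

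The main obstacle is preserving unique terminal assembly throughout the main phase: each bin may simultaneously contain several different labeled supertiles, yet must terminally produce exactly one well-defined line. I expect to handle this by reserving, at every level of the construction, a constant-size collection of interface-glue pairs (analogous to the three-glue trick in the proof of Theorem~\ref{powers of two}, inflated by constant factors for direction, parity, and the horizontal/vertical split) so that inside any bin only one concatenation pattern matches. The delicate accounting is to verify that all of this can be accomplished within the stated $16$-glue budget simultaneously for the line-building stages and for the temperature-$2$ square-filling tile set, without the glue palette growing with $n$ or with $B$.
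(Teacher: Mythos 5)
There is a fatal gap at the very first step: you reduce the problem to building a full $1\times n$ line in $O(\frac{\log n}{B^2}+\log B)$ stages, but that is impossible with $O(1)$ glues no matter how many bins you use. With a constant glue alphabet, a single bin can deterministically concatenate only a constant number of line segments per stage (two junctions sharing a glue already create an alternative assembly order, so the number of unambiguous junctions per bin is bounded by the glue count). Hence the maximum length of any supertile grows by at most a constant factor per stage, and after your $O(\log B)$-stage setup the library pieces have length $B^{O(1)}$; a further $O(\frac{\log n}{B^2})$ stages can reach length at most $B^{O(1)}\cdot c^{O(\log n/B^2)}\ll n$. The mix-graph counting argument you invoke shows that $\Theta(B^2)$ \emph{bits of description} can be injected per stage; it says nothing about how much \emph{physical length} can be manufactured per stage, and the two are conflated in your main phase. (A symptom of the same confusion: $n$ written in base $2^{B^2}$ has $2^{B^2}$ possible digit values, which cannot be indexed by a library of only $B$ pieces, nor selected by combining up to $B^2$ pieces in one bin with $O(1)$ glues.)

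The paper escapes this by never building anything of size $n$ during the staged phase. Following \cite{Rothemund-Winfree-2000}, it suffices to assemble a seed string of length $\Theta(\log n)$ whose bit pattern encodes $n$ in binary; a fixed temperature-$2$ counter-and-filler tile set then grows that seed into the $n\times n$ square inside a single bin, performing the exponential blow-up from $\log n$ to $n$ at no cost in stages. The staging is used only to \emph{program} the $\log n$ bits: $B^2$ bits are fixed per $O(1)$ stages by choosing which of two tiles goes into each of $B$ bins and concatenating the results, and the resulting $O(B)$-tile system is compressed to $O(1)$ tiles via the macro-glue construction of Theorem~\ref{thm:simulation} at a cost of $O(\log B)$ extra stages, giving $O(\frac{\log n}{B^2}+\log B)$ in total. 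Your temperature-$2$ ``fill from an L-shaped seed'' idea is in the right spirit, but it must be driven by a $\Theta(\log n)$-length encoding of $n$ rather than by a prefabricated $1\times n$ line; as written, the construction cannot meet the claimed stage bound.
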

\begin{proof}
\begin{figure}[t]
\centering
\includegraphics[scale=.4]{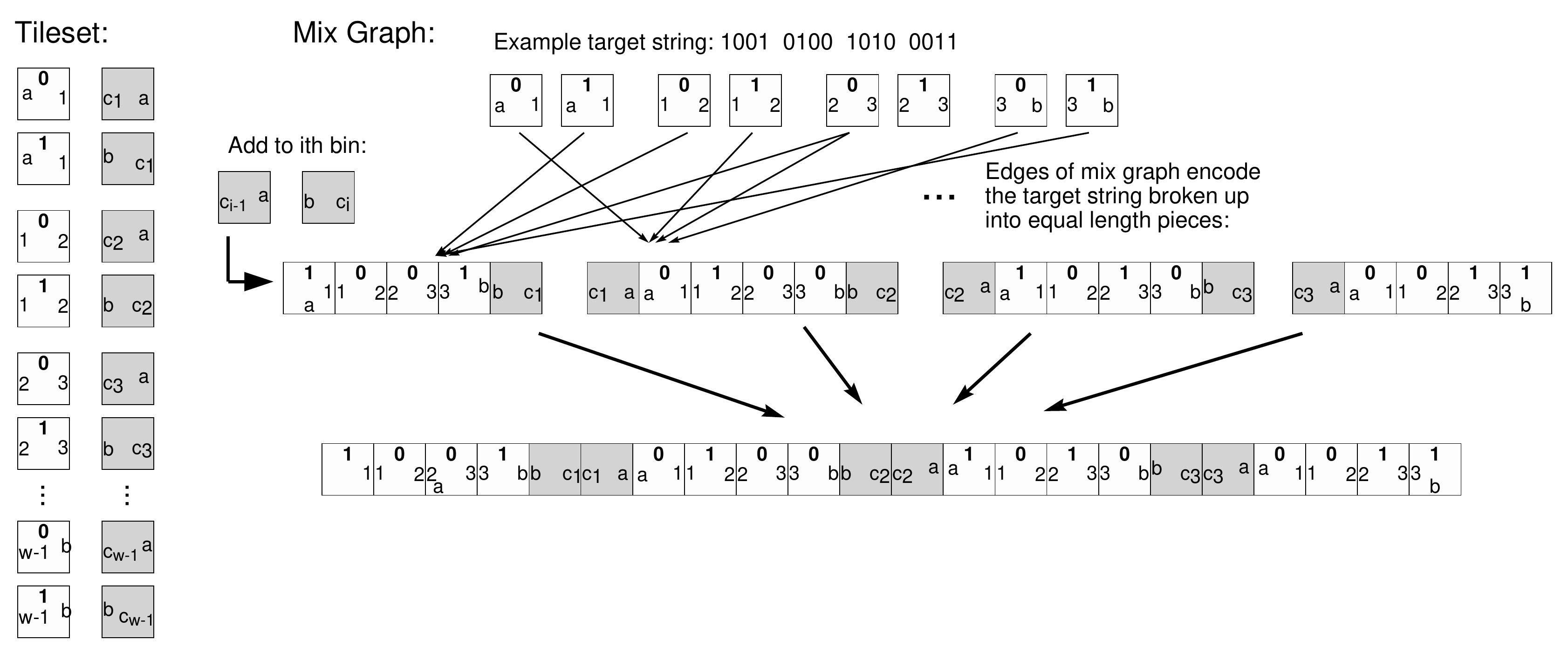}
\caption{This tileset and mix graph depict a tile system with $2w$
tiles and $w$ bins that will assemble an arbitrarily specified
length $w^2$ binary string.  } \label{figure:CrazySquare}
\end{figure}

\begin{figure}[t]
\centering
\includegraphics[scale=.85]{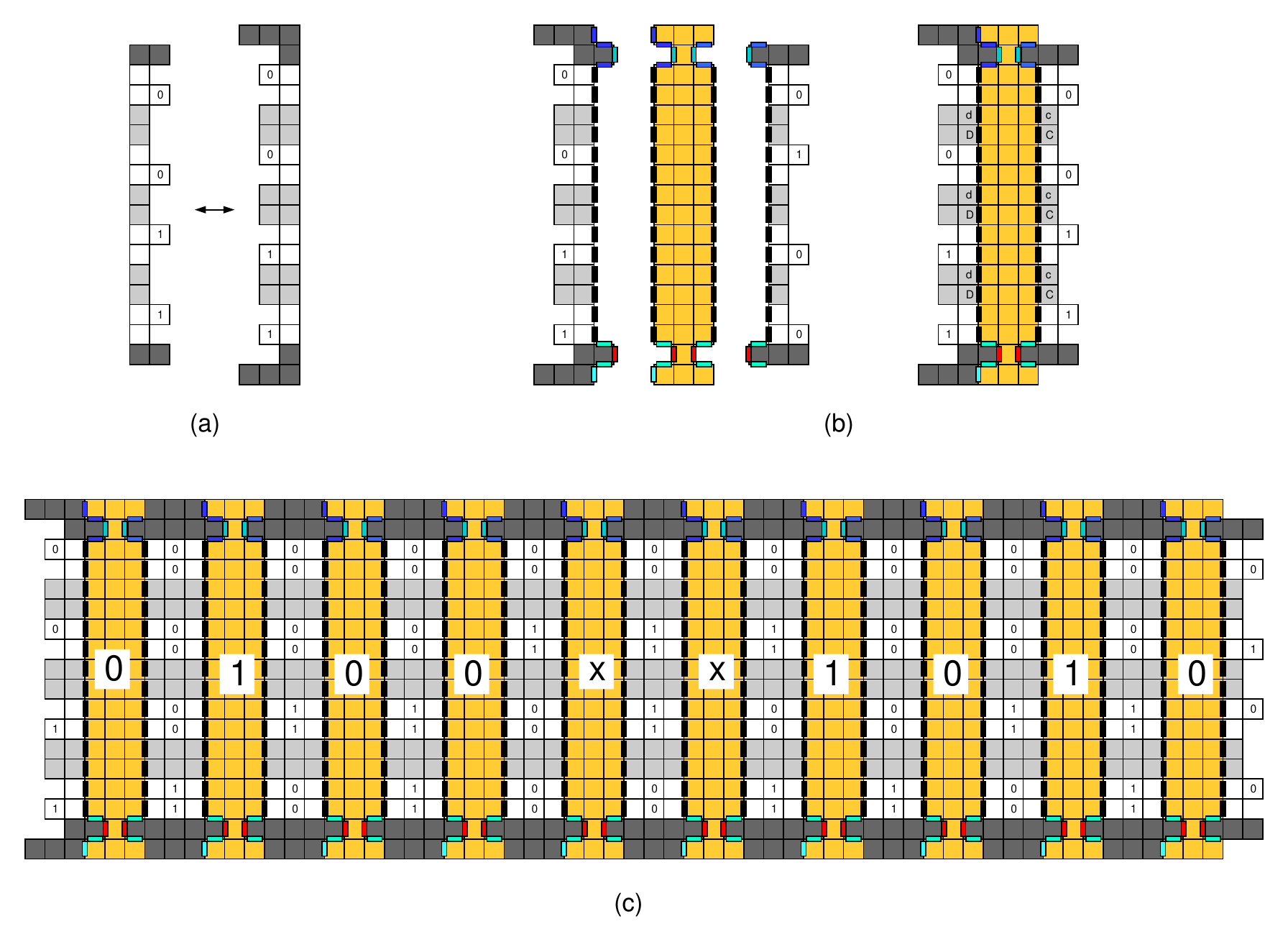}
\caption{ (a) As described in Section~\ref{sec:simulation}, a
collection of supertiles, each in its own bin, can be created such
that each supertile encodes a binary string with a sequence of
pockets and tabs specified by the binary pattern encoded.  These
supertiles act as large glues, or macro glues, by combining glue
type and geometry to bond only to their exact complement supertile.
(b) By combining macro glues onto the surfaces of supertiles, macrotiles
can be created whose assembly pattern is the same as that for a
corresponding set of singleton tiles.  (c)  The tile set of from
Figure~\ref{figure:CrazySquare} can be simulated with macro tiles to
create a binary string using only $O(1)$ tile complexity. The
example assembly shown here corresponds to the middle section of the
example assembly from Figure~\ref{figure:CrazySquare}.}
\label{figure:CrazyDetails}
\end{figure}

Within a $O(1)$ additive factor of tile complexity,
\cite{Rothemund-Winfree-2000} have reduced the problem of assembling
an $n\times n$ square at temperature 2 to the assembly of a
length-$\log n$ binary string that uniquely identifies~$n$.  A
straightforward adaptation of the analysis shows that this result
also works in the two-handed assembly model used in this paper.
Therefore, we focus simply on building an arbitrary $x$-bit input
binary string to prove the theorem.

We first show how to build a length-$O(B^2)$ bit string using a
temperature-$1$ system that makes use of $B$ bins, $O(B)$ distinct
tiles, and $O(1)$ stages. We then 
apply a technique similar to that of Theorem~\ref{thm:simulation} to
convert this system into a $O(1)$ tile complexity system with an
addition of $O(\log B)$ stages. Finally, to get all $x$ bits we can
repeat this process $\lceil\frac{x}{B^2}\rceil$ times for a total of
$O(\frac{x}{B^2} + \log B)$ stage complexity.

For some arbitrary integer $w$, consider the size $2w$ tileset and
corresponding $3$-stage mix graph given in
Figure~\ref{figure:CrazySquare}.  For each of $w$ bit positions,
there is a corresponding pair of white tiles, one representing the
binary value $0$, the other representing $1$.  By placing exactly
one white tile from each pair into a single bin, a length $w$ bit
string is specified.  In the transition from stage 1 to stage 2,
such a length $w$ string is built for each of $w$ bins, yielding $w$
length $w$ bit strings.  These strings can then be concatenated in
the transition from stage 2 to 3 to yield a length $w^2$ binary
string.

For $w= B/2$, this yields a system with $O(B)$ bins, $O(B)$ tile
complexity, and $O(1)$ stage complexity that assembles a
length-$B^2$ target string.  To reduce the tile complexity to
$O(1)$, we apply a technique similar to that of
Theorem~\ref{thm:simulation}.  In particular, we use $O(B)$ bins and
$O(log B)$ stages to create a size $B$ alphabet of \emph{macro}
glues as shown in Figure~\ref{figure:CrazyDetails}. Each macro glue
is a supertile that consists of a string of tiles representing bits
of a binary string.  Further, with the same bin and stage complexity
we create a parallel set of \emph{complement} macro glues as shown
in Figure~\ref{figure:CrazyDetails} (a).  Note that when combined
into the same bin, two macro glues will only attach to one another
if they are exact complements (have the same binary encoding).  By
design, the tooth like geometry of the macro glues provides that
even a single bit difference between two macro glues excludes even a
single bond from attaching.

Given this set of macro glues, we now conceptually index the set of
distinct glues from the size $O(B)$ tileset of
Figure~\ref{figure:CrazySquare} and assign each glue a corresponding
macro glue whose binary string matches the index of the glue. Next,
we attach macro glues to the long thin supertiles shown in
Figure~\ref{figure:CrazyDetails} which can be created using a
slightly modified version of the line algorithm from
Section~\ref{sec:line}. In particular, for each element of the
tileset from Figure~\ref{figure:CrazySquare}, we attach macro glues
corresponding to the east and west glues of the singleton tile.
Further, we can assign a glue representing '0', '1', or 'nothing' on
the north surface of each macro tile, according to which glue the
corresponding singleton tile displays on its north side.

Once we have built this set of \emph{macro tiles}, we mix them
according to the same mixing algorithm for the size $O(B)$ tile set,
but instead replace each singleton tile with its corresponding macro
tile.  By design, the macro glues attach exactly as the basic glues
they are built to emulate.  The result is thus a length-$w^2$ binary
string encoded on the north surface of the assembled macro tiles.


Finally, to get a length-$x$ string, we can repeat this process
$\lceil\frac{x}{B^2}\rceil$ times for a total of $O(\frac{x}{B^2}
+\log B)$ stage complexity.
Given this string, the technique of~\cite{Rothemund-Winfree-2000} is
easily adapted to take into account the $\log B$ vertical
magnification factor we introduce by utilizing the macro glue
construction.  
Further, while the technique of ~\cite{Rothemund-Winfree-2000} is
temperature 2 rather than 1, this is not a problem as we can simply
double the strength of each glue in our construction to make it a
temperature 2 system.  Details of applying the square building set
from~\cite{Rothemund-Winfree-2000} are straightforward and
applications of the technique to similar problems are considered
in~\cite{Aggarwal-Cheng-Goldwasser-Kao-Espanes-Moisset-Schweller-2005,Kao-Schweller-2006}.

Finally, we observe that the construction used here can be designed
to achieve full connectivity and is planar.  Further, the
construction of~\cite{Rothemund-Winfree-2000} maintains this full
connectivity and planarity, yielding the result.

\end{proof}


\xxx{need to add $16$-glue argument}




We conjecture that this stage complexity bound can be achieved by a
temperature-$1$ assembly by judicious use of the jigsaw technique.
\xxx{do this}

\section{Assembly of General Shapes}
\label{sec:general_shapes}

In this section, we describe a variety of techniques for manufacturing
arbitrary shapes using staged assembly with $O(1)$ glues and tiles.


\subsection{Spanning-Tree Technique}
\label{sec:spanning tree}

The \emph{spanning-tree technique} is a general tool for making an
arbitrary shape with the connectivity of a tree.  We start with
a sequential version of the assembly:

\begin{theorem} \label{SpanningTreeTheorem}
  Any shape $S$ with $n$ tiles has a partially connected temperature-$1$
  staged assembly using $2$ glues, at most $16$ tiles, $O(\log n)$ bins,
  and $O(\diameter(S))$ stages.
\end{theorem}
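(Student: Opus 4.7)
The plan is to root the spanning tree $T$ of $S$ at a height-minimizing vertex, so that the height of $T$ is at most $\diameter(S)$, and then build the supertile in BFS order, adding one layer of $T$ per stage. This is what accounts for the $O(\diameter(S))$ stage bound.

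For the glue assignment, I would use two glues $\{a,b\}$ alternating by BFS depth: the tree edge joining a tile at depth $d-1$ to its child at depth $d$ is labeled $a$ when $d$ is odd and $b$ when $d$ is even. Every edge of a tile in $S$ that is not a tree edge receives the $\nullglue$ glue, which directly enforces the (tree-only) partial connectivity. A tile in our construction is then determined by which subset of its four sides are tree edges, together with the (parity-determined) glue color on those sides; an accounting of which combinations actually arise leaves at most $16$ distinct tile types.

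The main obstacle, and where the $O(\log n)$ bins are spent, is preventing spurious attachments: with temperature $1$ and only two glue colors, a freshly introduced layer-$\ell$ tile could in principle attach to any exposed matching edge on the growing supertile's frontier, or to any other floating layer-$\ell$ tile of complementary type, rather than only to its intended parent position. To control this, instead of pouring all of layer $\ell$'s tiles into the main bin in one shot, I would pre-partition them into $O(\log n)$ bins using a balanced binary labeling of their positions---for instance a bit-by-bit decomposition of the parent's BFS index---and then merge the bins into the growing supertile one at a time, so that at each merge the frontier geometry of the partially-assembled supertile already occludes every matching site except the one intended landing position for each incoming tile. The crux of the formal proof is exhibiting such a labeling and showing inductively that each merge produces the unique intended terminal supertile, and that no floating pair of layer-$\ell$ tiles in the same bin can combine into a subassembly that later blocks or competes with the correct growth.
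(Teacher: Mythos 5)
Your glue and tile bookkeeping matches the paper's (two glue colors alternating by BFS depth on tree edges, $\nullglue$ everywhere else), but your assembly schedule is upside-down relative to the paper's, and the step you yourself flag as ``the crux'' is exactly where the proposal breaks. You grow the shape top-down from the root, releasing the depth-$\ell$ tiles as free singletons into a bin containing the partial assembly. At temperature $1$, a free depth-$\ell$ tile, whose parent-side glue has the color assigned to level $\ell$, will bind to \emph{any} exposed child-edge of that color on the frontier whose side is complementary to its own, not only its intended parent; worse, two free depth-$\ell$ tiles can bind to each other (one's parent edge meeting the other's parent or child edge on an opposite side), producing assemblies that are terminal but wrong. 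Your proposed fix --- pre-partitioning each layer into $O(\log n)$ bins by a ``balanced binary labeling'' so that ``frontier geometry occludes'' every matching site but the intended one --- is not exhibited, and it is not clear it can exist: an exposed child-edge of a depth-$(\ell-1)$ tile remains exposed until its own child arrives, and no ordering of the merges hides it. In addition, merging $O(\log n)$ sub-bins one at a time per layer yields $O(\diameter(S)\log n)$ stages, overshooting the claimed bound.

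The paper sidesteps all of this by assembling bottom-up. It roots a breadth-first spanning tree at a leaf (so every node has at most three children), and at node $v$ it mixes the \emph{completed} child-subtree supertiles with the single tile for $v$. Each completed subtree exposes exactly one non-null glue --- its root's parent edge, on the side facing $v$'s position --- so the only things in the bin that can stick together are $v$ and its children, each in its designated slot. The $O(\log n)$ bin bound then comes from an entirely different source, a Sethi--Ullman-style pebbling argument: children are visited in decreasing order of subtree size, so an intermediate product must be held only while recursing into a sibling whose subtree is at most a constant fraction of the size, and such holds nest to depth $O(\log n)$. If you want to salvage a top-down, layer-by-layer construction you need a genuinely new mechanism for site-specificity; as written, the proposal does not establish unique production.
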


\iffull
\begin{figure}[h]
\centering
\includegraphics[scale=.70]{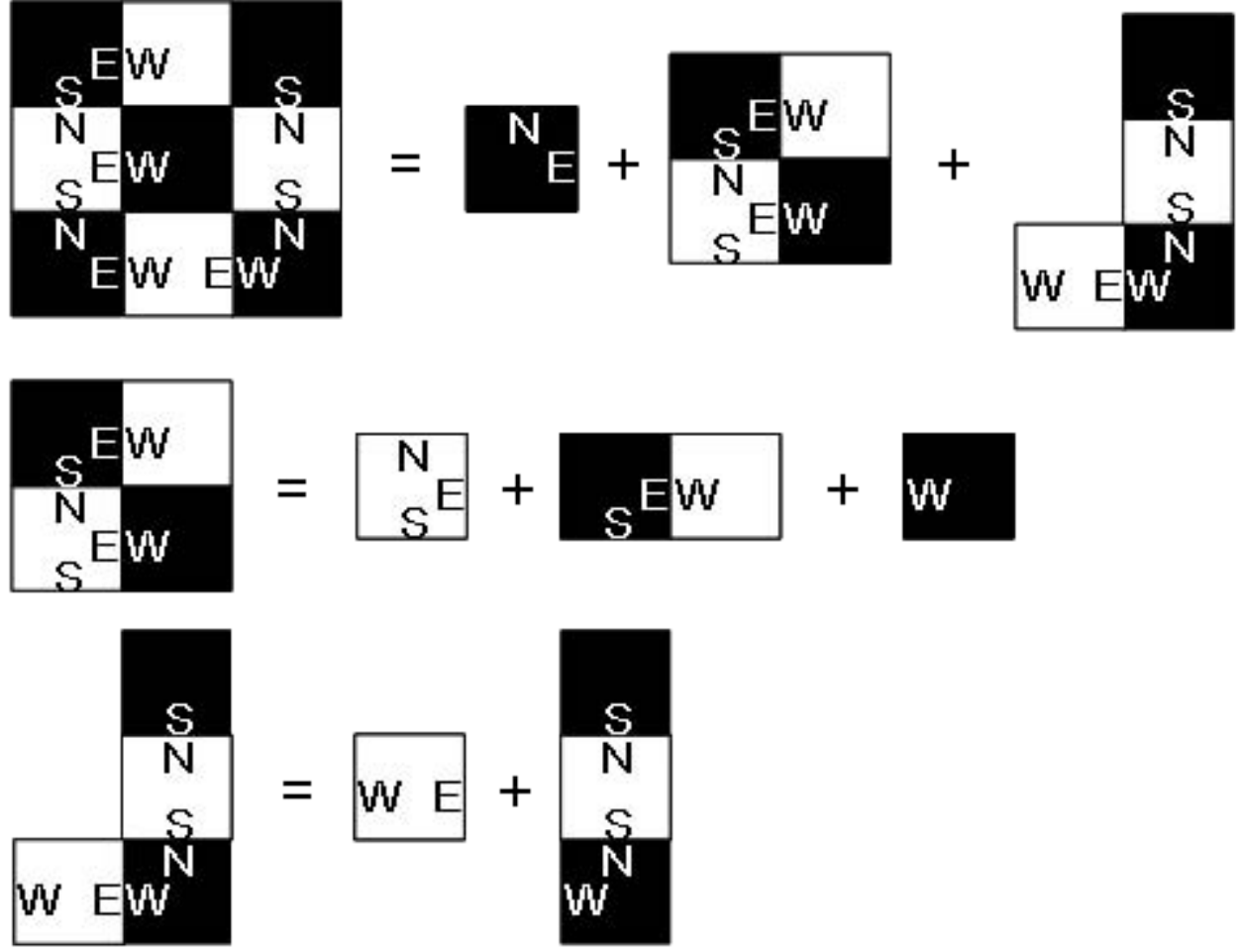}
\xxx{This figure needs fixing to use just two colors, no N/S/E/W.}
\caption{Spanning-tree method for assembling a $3 \times 3$ square.}
\label{figure:SpanningTree}
\end{figure}
\fi

\begin{proof}
  Take a breadth-first spanning tree of the adjacency graph of the shape~$S$.
  The depth of this tree is $O(\diameter(S))$.
  Root the tree at an arbitrary leaf.
  Thus, each vertex in the tree has at most three children.
  Color the vertices with two colors, black and white,
  alternating per level.
  For each edge between a white parent and a black child,
  we assign a white glue to the corresponding tiles' shared edge.
  For each edge between a black parent and a white child,
  we assign a black glue to the corresponding tiles' shared edge.
  All other tile edges receive the $\nullglue$ glue.
  Now a tile has at most three edges of its color connecting to its children,
  and at most one edge of the opposite color connecting to its parent.

  To obtain the sequential assembly, we perform a particular postorder
  traversal of the tree: at node~$v$, visit its child subtrees in
  decreasing order of size.  To combine at node~$v$, we mix the recursively
  computed bins for the child subtrees together with the tile corresponding
  to node~$v$.
  The bichromatic labeling ensures unique assemblage.
  The number of intermediate products we need to store is $O(\log n)$,
  because when we recurse into a second child, its subtree must have size
  at most $2/3$ of the parent's subtree.
  \xxx{probably should elaborate on that argument slightly}
\end{proof}

\iffull
Figure~\ref{figure:SpanningTree} illustrates
spanning tree method for assembling $3 \times 3$ square.
In general, this construction is nonplanar:
the trees may fit together like a key in a keyhole.
\fi

\iffull
\xxx{The following seems false?!}

The stage complexity of the spanning-tree technique can be reduced by
parallelization, at the cost of more bins:

\begin{theorem}
  Any shape $S$ with $n$ tiles has a partially connected temperature-$1$
  staged assembly using $O(1)$ tiles, $O(\log n)$ stages, and
  $O(n/\log n)$ bins.
\end{theorem}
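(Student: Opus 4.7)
The plan is to parallelize the sequential postorder assembly of Theorem~\ref{SpanningTreeTheorem} by decomposing the BFS spanning tree into many small pieces that can be assembled concurrently and then merged hierarchically. First, I would take a BFS spanning tree of $S$ rooted at an arbitrary leaf and apply exactly the same $2$-coloring and bichromatic glue assignment as in Theorem~\ref{SpanningTreeTheorem}; this labeling already guarantees unique assembly no matter what order the attachments happen in, so the only remaining task is to schedule the mixes.

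Next, I would decompose the tree into $\Theta(n/\log n)$ connected subtrees, each containing $\Theta(\log n)$ tiles, for instance by a greedy bottom-up clustering: walk the tree in postorder, accumulate tiles into the current cluster, and emit a cluster whenever its size reaches $\log n$. Contracting each cluster to a point yields a \emph{quotient tree} on $O(n/\log n)$ nodes. In the first phase, I would run the sequential construction of Theorem~\ref{SpanningTreeTheorem} independently on each cluster in parallel; since every cluster has diameter at most $\log n$, this phase uses $O(\log n)$ stages. In the second phase, I would merge the resulting $O(n/\log n)$ supertiles along the quotient tree, by recursing on this smaller instance; the recursion adds $O(\log(n/\log n))=O(\log n)$ stages and strictly fewer bins, giving $O(\log n)$ stages overall.

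The main obstacle is the bin bound rather than the stage bound. A direct accounting yields $O((n/\log n)\cdot \log\log n)$ bins during the first phase, because running Theorem~\ref{SpanningTreeTheorem} sequentially on a cluster of size $\log n$ costs $O(\log\log n)$ intermediate bins. Removing this stray $\log\log n$ factor requires one further refinement, and I would try either (i)~decomposing each cluster into a union of root-to-leaf paths via a heavy-path decomposition and assembling each path using the $O(1)$-bin line construction underlying Theorem~\ref{powers of two}, which drops each cluster's bin cost to $O(1)$; or (ii)~pipelining the two phases so that bins holding intermediates of early-completing clusters are immediately repurposed for merging, amortizing the peak bin usage to $O(n/\log n)$. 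The author's marginal note ``seems false?!'' suggests the constants here are delicate, and the genuine technical content of the proof lies precisely in controlling this bin count while preserving the $O(\log n)$-stage schedule.
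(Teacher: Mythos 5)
Your overall plan --- partition the spanning tree into $\Theta(n/\log n)$ connected pieces of size $\Theta(\log n)$, assemble each piece by the sequential method, then merge the pieces hierarchically --- is essentially the paper's own proof (the paper splits top-down into balanced halves and stops the recursion at size $\log n$; you cluster bottom-up and recurse on the quotient tree, which amounts to the same two phases). You have also correctly spotted the bin-count problem that the paper glosses over: Theorem~\ref{SpanningTreeTheorem} uses $O(\log m)$ bins on an $m$-tile piece, so a size-$\log n$ cluster costs $O(\log\log n)$ bins, not the $O(1)$ the paper asserts, and your $O((n/\log n)\log\log n)$ accounting is the honest one. Neither of your proposed repairs closes this, though: a heavy path in the spanning tree is a crooked path in the grid, not a $1\times 2^k$ line, so the construction underlying Theorem~\ref{powers of two} does not apply to it; and pipelining does not reduce the \emph{peak} bin count when all $n/\log n$ clusters run the same $O(\log n)$-stage schedule in lockstep, which is exactly when the peak occurs.

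The more serious gap is your opening claim that the bichromatic labeling ``already guarantees unique assembly no matter what order the attachments happen in.'' It does not, and this is the likely reason the authors' own marginal note on this theorem reads ``seems false?!'' --- the doubt concerns correctness of the assembly, not delicate constants. The sequential proof of Theorem~\ref{SpanningTreeTheorem} leans on the mixing order: at every mix, each child supertile exposes exactly one non-null glue (at its root) and is joined to a single fresh tile, which is what limits the possible attachments. Once you assemble clusters independently, a finished cluster supertile exposes one non-null glue for \emph{every} spanning-tree edge crossing its boundary, and all of these glues are drawn from a two-letter alphabet. When two such supertiles are mixed at temperature $1$, a single matching glue suffices for attachment, so nothing prevents them from bonding at an unintended pair of like-colored exposed edges; uniqueness of the terminal assembly is therefore not established. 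A correct proof must either bound the number of exposed boundary glues per cluster and distinguish them with additional glue colors, or argue geometrically that wrong attachments are blocked; without that, both your argument and the paper's are incomplete.
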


\begin{proof}
  As before, we consider a two-colored breadth-first spanning tree.
  To build an $n$-tile tree,
  split this tree into two trees of at most $2n/3$ nodes.
  Recursively build these two trees, and then mix the two resulting bins of
  supertiles together.
  If we continue this recursion down to individual nodes (tiles),
  we get $n$ such trees and the stage complexity reduces to $O(\log n)$,
  but the bin complexity is now $O(n)$.
  We can do better if we recurse until we get
  $n/{\log n}$ trees of size $\log n$ each.
  By Theorem \ref{SpanningTreeTheorem},
  each of these trees can be built using $O(\log n)$ stages and $O(1)$ bins.
  Thus we need $n/\log n$ bins in total.
  These trees can be combined using another $O(\log n)$ stages to get the
  $n$-tile tree.
\end{proof}
\fi

\subsection{Scale Factor $2$}
\label{sec:factor 2}

Although the spanning-tree technique is general, it probably manufactures
structurally unsound assemblies.
Next we show how to obtain full connectivity of general shapes,
while still using only a constant number of glues and tiles.

\begin{theorem} \label{scale2}
  Any simply connected shape has a staged assembly
  using a scale factor of~$2$, $8$~glues, $O(1)$ tiles,
  $O(n)$ stages, and $O(n)$ bins.
  \xxx{Possible to get $O(\sqrt n)$ stages or $O(\sqrt n)$ bins?}
  The construction maintains full connectivity.
\end{theorem}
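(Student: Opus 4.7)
The plan is to assemble the doubled shape one $2 \times 2$ block at a time, with one stage per block. First I would root a spanning tree of the adjacency graph of the original shape and fix a depth-first traversal, so that every original tile except the root has a tree-parent that is placed earlier. Under the factor-$2$ scaling, each original tile becomes a $2 \times 2$ block, and every shape-adjacency becomes a shared side of length two carrying two parallel glue edges. This extra glue slot per shared side is the crucial resource the scaling buys us.

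In the first stage I would preassemble, in parallel and each in its own bin, the $n$ different $2 \times 2$ blocks. Each block consists of only four tiles, and with $8$ glues there are only $O(1)$ relevant block types to describe, so globally only $O(1)$ singleton tile types are needed, while the $O(n)$ bin budget lets each distinct block sit in its own bin during preassembly. The glues on the outer sides of each block are assigned from an $8$-glue palette: the side facing the tree-parent gets a specific \emph{ordered} pair of glues used to dock with the parent; each side facing a tree-child gets the ordered pair the child will dock against; sides on shape-boundaries get $\nullglue$; and sides that face a shape-neighbor through a \emph{non-tree} adjacency of the polyomino are pre-assigned a pair that matches the reciprocal side of that neighbor, so that the cycle edge will seal (with full connectivity) when the two supertiles eventually meet.

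The remaining $n$ stages perform the sequential attachments in DFS order: at stage $i$ I mix the current partial supertile with the $i$th preassembled block. The key point is that the shared side carries an \emph{ordered} pair of glues, giving $8^2 = 64$ pair patterns per side, so a single pair both fixes the geometric orientation of the attachment (a flipped block presents the reversed pair and does not match) and picks out one active site on the perimeter. Because each tree node has degree at most $4$ and the adjacency graph of a simply connected polyomino is planar with bounded degree, a greedy $4$-edge-coloring of the tree-plus-non-tree edges can be lifted to a pair-assignment that avoids collisions along any DFS stack, so at every stage the incoming block's parent-side pair matches exactly one exposed side of the current supertile.

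The main obstacle is this global uniqueness argument, particularly in the presence of non-tree adjacencies: those cycle-closing pairs must be chosen in advance yet never equal the parent-side pair of an incoming block that is in flight elsewhere on the perimeter. The plan is to handle this by coloring the entire adjacency graph (not just the spanning tree) before staging, then verifying that when the DFS reaches block $i$, the set of currently exposed child-sides plus cycle-sides all carry distinct ordered pairs from the one required by block $i$'s parent-side. Once uniqueness of attachment is established at every stage, full connectivity is immediate, because every scaled shared side has both of its two glue positions matched by positive-strength glues.
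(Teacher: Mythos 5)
Your resource accounting ($O(n)$ stages, $O(n)$ bins, $O(1)$ tiles, $8$ glues) is plausible, but the central mechanism you rely on---an \emph{ordered pair} of glues on each doubled side that ``fixes the geometric orientation of the attachment'' and ``picks out one active site on the perimeter''---does not work at temperature~$1$ with $O(1)$ glues. In the two-handed model, two supertiles may combine whenever \emph{some} non-overlapping placement produces total bond strength at least $\tau=1$, i.e., a \emph{single} matching glue anywhere suffices; a mismatched second glue in the pair does not veto the attachment. This is exactly the shifting problem the paper illustrates for squares (Figure~\ref{figure:ShiftingProblem}). Since the frontier of exposed positive-strength edges during your DFS can have size $\Theta(n)$ (the set of pending child-sides and unclosed cycle-sides is not bounded), the pigeonhole principle forces many exposed edges to carry the same glue, and the incoming $2\times 2$ block can then dock at a wrong site, or at the right site shifted by one unit, with one bond. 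No edge-coloring of the adjacency graph with $64$ ordered pairs can prevent this, because distinctness of pairs is not the failure mode---a single coincident glue is. Your treatment of non-tree adjacencies makes this worse: the cycle-closing glues you pre-assign sit exposed on the perimeter, live, long before their partners arrive.

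The missing idea is that the factor-$2$ scaling must be spent on \emph{geometry}, not on a second glue slot. The paper's proof slices the scaled shape into $2\times 2k$ horizontal strips; because the shape is hole-free, the strip adjacency graph is already a tree (no non-tree edges to seal), and each attachment between a strip and a child subtree is realized by a jigsaw tab/pocket carved out of the extra scale, so that the attachment position is forced by interlocking shape rather than by glue identity. Within a strip, assembly proceeds column by column, and each docking interface is a single column of height at most $3$, so three distinct glues per side suffice to make every shifted placement bond-free---the interface is kept at \emph{constant} size precisely so that $O(1)$ glues can kill all shifts. If you want to salvage your block-by-block plan, you would need to import both of these ingredients: a tree-structured decomposition that exists because the shape has no holes, and a geometric interlock (tabs and pockets) at every attachment so that uniqueness never rests on glues alone.
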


\begin{proof}
  Slice the target shape with horizontal lines to divide the shape
  into $1 \times k$ strips for various values of~$k$,
  which scale to $2 \times 2k$ strips (uniform factor-2 scaling of the target shape).
  These strips can be adjacent along horizontal edges but not along vertical edges. 
  Define the \emph{strip graph} to have a vertex for each strip
  and an edge between two strips that are adjacent along a horizontal edge.
  Because the shape is simply connected (hole-free), the strip graph is a tree.
  Root this tree at an arbitrary strip, defining a parent relation.

  A recursive algorithm builds the subtree of the strip graph rooted at
  an arbitrary strip~$s$.
  As shown in Figure~\ref{figure:scale2}(a), the strip $s$
  may attach to the rest of the shape at zero or more places
  on its top or bottom edge.
  One of these connections corresponds to the parent of~$s$
  (unless $s$ is the overall root).
  As shown in Figure~\ref{figure:scale2}(b), our goal is to form
  each of these attachments using a jigsaw tab/pocket combination,
  where bottom edges have tabs and top edges have pockets, extending
  from the rightmost square up to but not including the leftmost square. Factor-2 scaling ensures that it is always possible to create these tabs and pockets.

\begin{figure}[h]
\centering
\includegraphics[scale=.3]{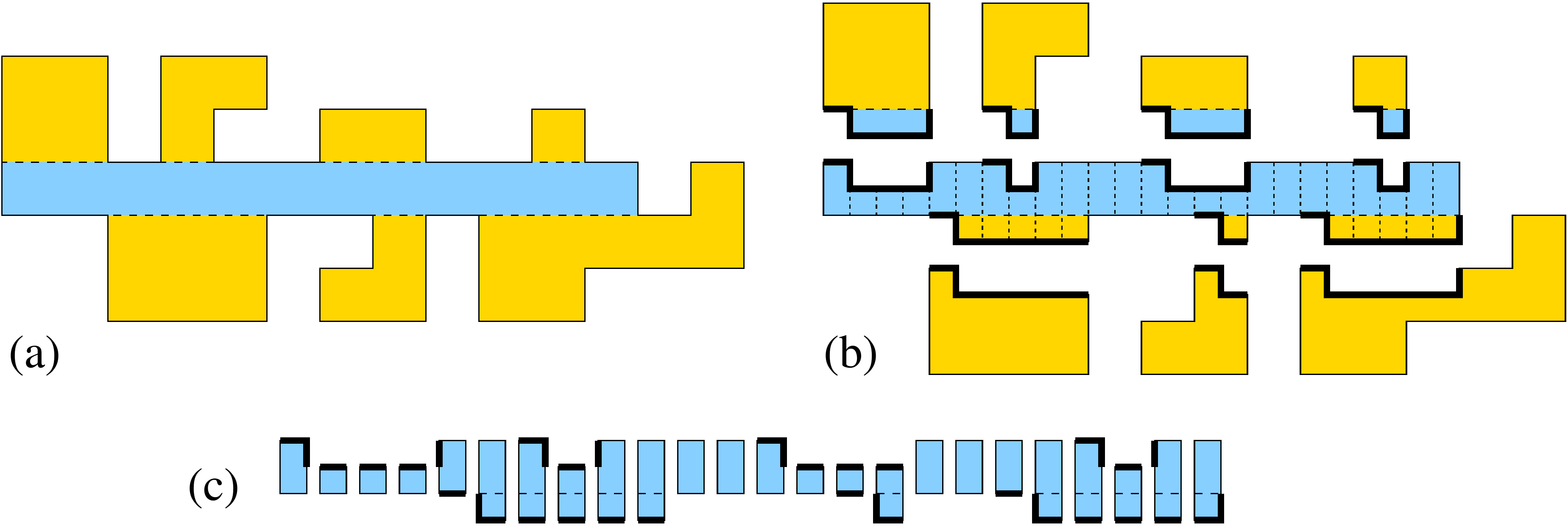}
\iffull \xxx{switch to scale2 but add a,b,c} \fi
\caption{Constructing a horizontal strip in a factor-$2$ scaled shape (a),
  augmented by jigsaw tabs and pockets to attach to adjacent pieces (b),
  proceeding column-by-column (c).}
\label{figure:scale2}
\end{figure}

  The horizontal edges of each tab or pocket uses a pair of glues.
  The unit-length upper horizontal edge uses one glue,
  and the possibly longer lower horizontal edge uses the other glue.
  The pockets at the top of strip $s$ use a different glue pair
  from the tabs at the bottom of strip~$s$.
  Furthermore, the pocket or tab connecting $s$ to its parent
  uses a different glue pair from all other pockets and tabs.
  Thus, there are four different glue pairs (for a total of eight glues).
  If the depth of $s$ in the rooted tree of the strip graph
  is even, then we use the first glue pair for the top pockets,
  the second glue pair for the bottom tabs, except for the connection
  to the parent which uses either the third or fourth glue pair
  depending on whether the connection is a top pocket or a bottom tab.
  If the depth of $s$ is odd, then we reverse the roles of the
  first two glue pairs with the last two glue pairs.
  All vertical edges of tabs and pockets use the same glue,~$8$.

  To construct the strip $s$ augmented by tabs and pockets,
  we proceed sequentially from left to right,
  as shown in Figure~\ref{figure:scale2}(c).
  The construction uses two bins.
  At the $k$th step, the primary bin contains the first $k-1$ columns
  of the augmented strip.
  In the secondary bin, we construct the $k$th column by brute force
  in one stage using 1--3 tiles and 0--2 distinct internal glues
  plus the desired glues on the boundary.
  Because the column specifies only two glues for horizontal edges,
  at the top and bottom, we can use any two other glues for the internal glues.
  All of the vertical edges of the column use different glues.
  If $k$ is odd, the left edges use glues $1$--$3$
  and the right edges uses glues $4$--$6$, according to $y$ coordinate;
  if $k$ is even, the roles are reversed.
  (In particular, these glues do not conflict with glue~$8$
  in the tabs and pockets.)
  The only exception is the first and last columns, which have no glues on
  their left and right sides, respectively.
  Now we can add the secondary bin to the primary bin,
  and the $k$th column will uniquely attach to the right side of the
  first $k-1$ columns.  In the end, we obtain the augmented strip.

  During the building of the strip, we attach children subtrees.
  Specifically, once we assemble the rightmost column of an attachment
  to one or two children strips, we recursively assemble those one or
  two children subtrees in separate bins, and then mix them into $s$'s
  primary bin.  Because the glues on the top and bottom sides of $s$
  differ, as do the glues of $s$'s parent, and because of the jigsaw approach,
  each child we add has a unique place to attach.
  Therefore we uniquely assemble $s$'s subtree.
  Applying this construction to the root of the tree, we obtain a unique
  assembly of the entire shape.
\end{proof}


\subsection{Simulation of One-Stage Assembly with Logarithmic Scale Factor}
\label{sec:simulation}


In this section, we show how to use a small number of stages to
combine a constant number of tile types into a collection of
supertiles that can simulate the assembly of an arbitrary set of
tiles at temperature $\tau = 1$, given that these tiles only
assemble fully connected shapes. \ifabstract In the interest of
space, the details of this proof are omitted. Extending this
simulation to temperature-$2$ one-stage systems is an open problem.
\fi

\begin{theorem} \label{thm:simulation}
Consider an arbitrary single stage, single bin tile system with tile
set $T$, all glues of strength at most~$1$, and that assembles a
class of fully connected shapes. There is a temperature-$1$ staged
assembly system that simulates the one-stage assembly of $T$ up to
an $O(\log |T|)$ size scale factor using $3$ glues, $O(1)$ tiles,
$O(|T|)$ bins, and $O(\log\log |T|)$ stages. At the cost of
increasing temperature to $\tau=2$, the construction achieves full
connectivity. \xxx{Do bin and stage bounds hold in terms of number
of glues in input,
  instead of number of tiles?}
\end{theorem}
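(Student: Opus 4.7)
The plan is to mimic the macro-glue construction sketched in Section~\ref{sec:square crazy}: encode each glue label used by $T$ as a distinct binary string of length $\ell = \lceil \log_2 |T| \rceil$, build a \emph{macro glue} supertile for each such label whose silhouette is a jigsaw pattern of tabs and pockets determined by the bits of the string, and attach four macro glues around a constant-size core to obtain a \emph{macro tile} for each $t \in T$. Because two macro glues interlock only when their binary patterns match bit-for-bit, pouring all macro-tile bins together in one final stage simulates the original one-stage tile system exactly, at an $O(\ell) = O(\log|T|)$ scale factor.

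First I would build, in its own bin, each of the $O(|T|)$ macro glues (and their geometric complements) actually appearing on the sides of tiles in~$T$. A macro glue is a $1 \times O(\ell)$ strip of tiles decorated with tabs and pockets encoding the bits, and I would assemble it by repeated doubling: starting from length-$1$ strips representing ``0'' and ``1'', stage $i$ doubles the length of the strings held in each bin by pairwise mixing of length-$2^{i-1}$ strips, routed via the mix graph so that each output bin receives exactly the two specific half-strings needed to form its target length-$2^i$ string. The jigsaw interface between the two halves ensures that only the intended pair combines, and it suffices to use three glues: one for ``0'' positions, one for ``1'' positions, and one for the shared jigsaw backbone. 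Since at each level the set of used prefixes has size at most $|T|$, the bin complexity stays $O(|T|)$, and after $\log_2 \ell = O(\log\log|T|)$ doubling stages every bin holds its intended length-$\ell$ macro glue. A constant number of additional stages then attach the four appropriate macro glues around a constant-size core, producing one macro tile per element of~$T$, using $O(1)$ tile types and $O(|T|)$ bins throughout.

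The final stage pours all $O(|T|)$ macro-tile bins into one bin. At temperature~$1$, any attempted attachment of two macro tiles along non-matching macro glues is blocked by the jigsaw geometry: even a single mismatched bit creates a shape conflict that holds the two glues apart, so the only legal attachments are those in which the abutting macro glues encode identical glue labels. Consequently the assembly proceeds exactly as in the original single-stage system up to the $O(\log|T|)$ scale factor, yielding the claimed simulation at $\tau=1$. The main obstacle I anticipate is upgrading the construction to full connectivity. Because mating macro tiles share bonds only along the tab/pocket boundary between their macro glues, the naive construction leaves many interior tile edges without a bonding partner, giving only partial connectivity at $\tau=1$. To handle this at $\tau=2$, I would thicken each macro glue into a two-layer strip and place strength-$2$ glues along the interior tab/pocket contacts, so that every pair of tiles abutting across a mating boundary shares a positive-strength bond while the interior of each macro glue remains fully connected; this preserves the scale factor, tile count, bin count, and stage count up to constants, at the cost of raising the temperature from~$1$ to~$2$.
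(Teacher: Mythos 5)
Your proposal is correct and follows essentially the same route as the paper's proof: binary-string macro glues assembled by repeated length-doubling in $O(\log\log|T|)$ stages, interlocking tab/pocket geometry so that even a single mismatched bit blocks any bond at $\tau=1$, and macro tiles formed by attaching four such glues around a core so that the final mixed bin reproduces the one-stage assembly at scale $O(\log|T|)$. The only cosmetic differences are that the paper generates \emph{all} $2^{2^i}$ strings at each doubling level via all-pairs mixing (rather than routing only the needed half-strings through the mix graph), and that both your sketch and the paper's leave the precise $3$-glue bookkeeping and the $\tau=2$ full-connectivity upgrade at the level of a plausibility argument.
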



\xxx{need to add $3$-glue argument}

\begin{proof}
Suppose the $T$ uses $c$ distinct glue types.  As described in
Figure~\ref{figure:MacroGlue}, the initial stage of assembly
can use three distinct tile types that assemble into a supertile
representing $0$ in a first bin, and three tile types for the assembly of
a supertile representing $1$ in a second bin.  We can then split these
supertiles into four groups and attach tile types $a$ and $A$ as shown
in Figure~\ref{figure:MacroGlue}.  The third stage mixes all
possible combinations of supertiles attached to tile type $a$ with
those attached with type $A$ to get a distinct supertile for each
possible $4$-bit binary string.  This process can be repeated to
obtain all possible length-$8$ bit strings, and so on.  Thus, within
$O(\log\log c)$ stages we can obtain at least $c$ distinct binary
strings of length at most $O(\log c)$.

\begin{figure}[h]
\centering
\subfigure[]{\includegraphics[scale=.55]{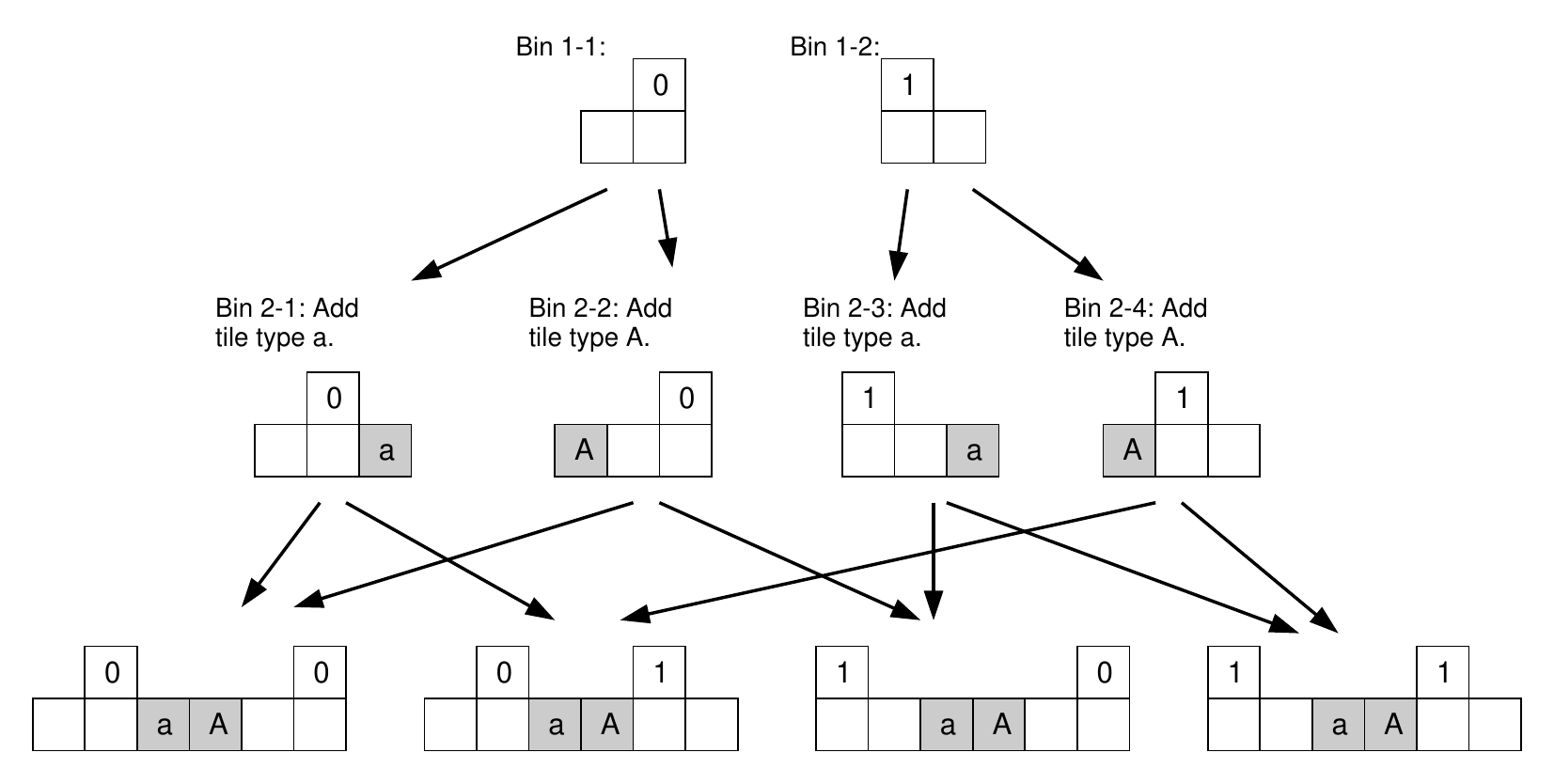}}\hfil
\subfigure[]{\includegraphics[scale=.55]{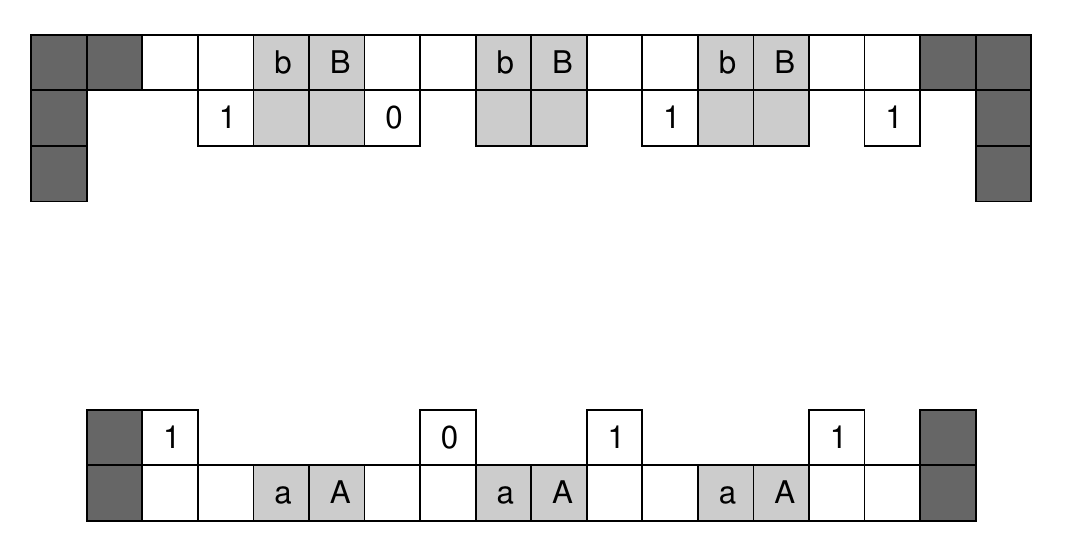}}
\caption{(a) Using $O(1)$ tile types and $O(\log r)$ stages, we can
assemble $2^r$ different supertiles, each encoding a distinct $r$
bit binary string.  (b) By creating two versions of each string and
appending tiles to the ends we can enforce that identical strings
combine while distinct strings do not.  Note that even if a single
bit differs between two strings, the rigid geometry of the
supertiles ensure that no tiles will be able to bond.}
\label{figure:MacroGlue}
\end{figure}

\begin{figure}[h]
\centering
\includegraphics[scale=.75]{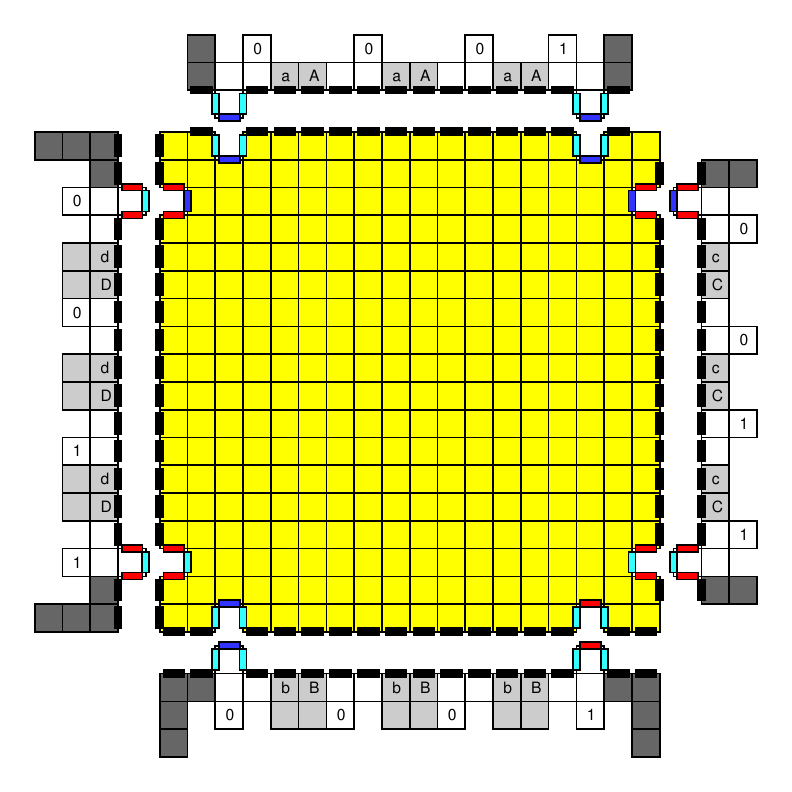}
\caption{By constructing an alphabet of binary strings for each of
the four possible tile sides, arbitrary combinations of four can be
brought together to assemble macro tiles.  This permits the
simulation of $\tau=1$ tile systems with macro blocks using only
$O(1)$ tile types.} \label{figure:TileSimulation}
\end{figure}

Repeating this process four times produces an alphabet of
glue types for each tile side.  As shown in
Figure~\ref{figure:MacroGlue}, we can make the geometry of identical
bits for opposite directions (north/south, east/west) be
interlocking.  Thus, when two glues are lined up against each other,
if all bits match, the two supertiles can lock together and get a
full bonding.  However, due to the interlocking geometry, if even a
single bit does not match, this mismatch will prevent the two
supertiles from getting close enough to get even a single bond.
Further, to prevent shifting of strings that share
prefixes/suffixes, we can attach the interlocking dark tiles shown
in Figure~\ref{figure:MacroGlue}(b).

Finally, given the four alphabets of glues with each glue type in a
separate bin, we can bring together arbitrary combinations of four
to create macro tiles as shown in
Figure~\ref{figure:TileSimulation}.  We can thus create a set of
macro tiles that will bond in the same fashion as any given target
$\tau=1$ tile system.  The holes in the constructed shape can
trivially be filled in in a nonplanar fashion by adding in a
constant size set of filler tiles.

Note that the construction does not work for simulating $\tau=2$
systems if we restrict ourselves to a constant bounded temperature.
This is because a single glue match for a macro tile yields a
\emph{large}, nonconstant number of bonds.  Further, note that when
a macro tile attaches at a position adjacent to two or more already
attached macro tiles, it cannot attach within the plane, making the
construction inherently nonplanar.
\end{proof}

Extending this simulation to
temperature-$2$ one-stage systems is an open problem.


\subsection{Assembly of Monotone Shapes}
\label{sec:monotone}

\begin{theorem}
  Any monotone shape has a fully connected temperature-$1$ staged assembly
  using $9$ glues, $O(1)$ tiles, $O(\log n)$ stages, and $O(n)$ bins,
  where $n$ is the side length of the smallest square bounding~$S$.
\end{theorem}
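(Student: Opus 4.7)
The plan is to exploit monotonicity by slicing the shape into unit-height horizontal rows, reusing the $1 \times k$ line construction to manufacture each row in parallel, and then stacking the rows via a balanced binary merge schedule with the jigsaw technique enforcing alignment.

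First I would cut the target along every integer horizontal line, producing at most $n$ rows; by monotonicity each row is a contiguous $1 \times k_i$ interval. Allocating one bin per row uses $O(n)$ bins. In parallel across all bins I would then assemble each $1 \times k_i$ line using the corollary of Theorem \ref{powers of two}, at a cost of $3$ glues, $O(1)$ tiles, and $O(\log n)$ stages.

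Next I would augment each row with jigsaw tabs on its bottom edge and pockets on its top, placed at the extreme columns of the row. Because the shape is monotone, the left and right boundaries trace a well-structured staircase, so the tab/pocket pattern across any horizontal interface uniquely encodes the correct horizontal offset between two adjacent rows. I would then merge rows in a binary tree of depth $O(\log n)$: at stage $t$, pairs of vertically adjacent sub-strips of $2^t$ rows each are poured together and combine through the shared jigsaw interface. Three additional glues for horizontal tab/pocket edges and three for vertical tab/pocket edges bring the total to $9$, while the bin count stays $O(n)$ since active bins halve with each merge.

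The main obstacle is proving unique, fully connected assembly at every merge stage using only $9$ glues. The jigsaw must prevent shifting between sub-strips that share a long horizontal interface, and glues must be reused across levels of the merge tree without allowing incompatible sub-strips to stick to each other. A two-coloring of sub-strips by depth parity in the merge tree, combined with separate glue roles for interior versus exterior jigsaw features, should resolve both issues; verifying that $9$ glues really suffice for this encoding, and checking that planarity is preserved when the staircase boundary reverses direction, is the delicate part.
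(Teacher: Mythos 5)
Your decomposition is genuinely different from the paper's: you slice the shape into unit-height rows and merge them bottom-up in a balanced binary tree, whereas the paper recursively bisects the shape into left and right pieces near its middle columns (exactly as in the jigsaw construction for the square), choosing among three candidate middle columns where to cut and falling back on an ``elbow''-shaped cut when no column adjacency is wide enough to host a jigsaw tab and pocket. The row-based view is attractive because monotonicity makes every row a contiguous $1 \times k_i$ interval, and your bin and stage accounting ($O(n)$ bins, $O(\log n)$ stages) goes through.

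The gap is that you have deferred precisely the step that constitutes essentially the paper's entire proof: forcing a unique horizontal offset along a row-to-row interface at scale factor $1$, temperature $1$, and full connectivity. Full connectivity forces every shared edge along the interface between consecutive rows to carry a positive glue; with $O(1)$ glues and interfaces of length up to $n$, glues must repeat along the interface, and at temperature $1$ a single repeated match permits a shifted attachment. The jigsaw tab/pocket cures this only if (i) the interface is wide enough to host a tab at one end and a pocket at the other, and (ii) the cells you steal from a row to form a tab do not collide with the tab or pocket needed at that same row's \emph{other} interface, which is consumed at a different level of your merge tree. Condition (i) fails whenever two consecutive rows overlap in only one or two columns---exactly the degenerate situation the paper's three-way case analysis and elbow construction exist to handle (and note that the scale-factor-$2$ construction of \S\ref{sec:factor 2} introduces the scaling precisely because tabs and pockets along horizontal interfaces do not always fit at scale $1$). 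Condition (ii) is not addressed at all: once augmented, the pieces are no longer lines, and you also need an explicit scheme preventing a merged block from attaching spuriously to a copy of itself or to its partner via the still-unconsumed glues on its far face; depth parity alone does not work, because in a balanced merge the levels at which a block's top and bottom interfaces are consumed are not determined by the block's own depth. Until these points are resolved, neither the $9$-glue count nor unique assembly is established.
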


\begin{proof}
We assume wlog that the shape is x-monotone, which means its intersection with any vertical line is connected. We use the similar construction that we used for building square. We first decompose the shape horizontally to get long thin supertiles which we already know how to build. Here we will only discuss horizontal decomposition.
During horizontal decomposition, the challenge is to decompose a supertile $S$ into a left and a right supertile that can be combined uniquely. We decompose $S$ horizontally only when the number of columns in $S$ is greater than $3$, otherwise, we just need vertical decomposition. Let $i$, $i+1$, and $i+2$ be the three columns roughly in the middle of the supertile $S$. Column $i$ is adjacent to the column $i+1$ at certain locations. Since the shape is x-monotone, the tiles in column $i$ adjacent to column $i+1$ form a connected component. Same is the case with tiles in column $i+2$ that are adjacent to column $i+1$.

\begin{figure}[h]
\centering
\includegraphics[scale=.50]{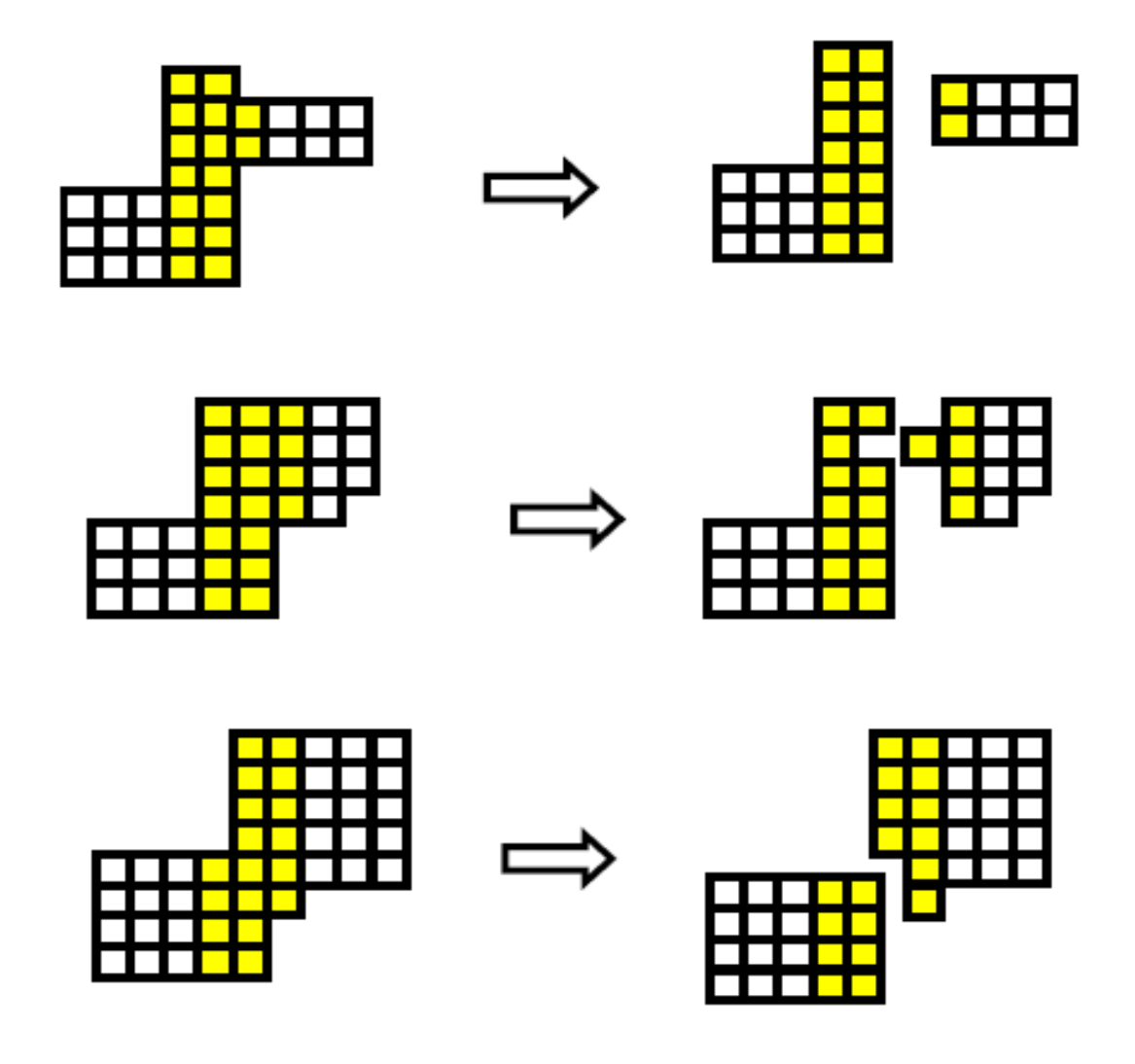}
\caption{Assembling a monotone shape.}
\label{figure:ortho}
\end{figure}

If the number of adjacent tiles between column $i$ and $i+1$ is $\leq 3$, we simply cut $S$ between column $i$ and $i+1$. Otherwise if the number of tiles in column $i+2$ adjacent to the tiles in column $i+1$ is $\leq 3$, we can break $S$ between columns $i+1$ and $i+2$. See Figure~\ref{figure:ortho} (top).

If column $i+1$ is adjacent to both columns in more than three tiles, we find the tiles in column $i+1$ that are adjacent to both columns. These tiles form a connected component due to monotonicity. If the number of such tiles $\geq 3$ we can create a jigsaw tab/pocket combination at column $i+1$. See Figure~\ref{figure:ortho} (middle). Notice the left supertile is not monotone anymore because of the last column. But we can ignore the last column because it will never be one of the three middle columns until the supertile contains only three columns and at that point we don't need horizontal decomposition any more.

If number of tiles in column $i+1$ that are adjacent to both columns is $< 3$, we decompose $S$ by creating an elbow see Figure~\ref{figure:ortho} (below). To create an elbow: assume without loss of generality that the highest tile in column $i$ adjacent to column $i+1$ is lower than the highest tile in column $i+2$ adjacent to column $i+1$. We cut the column $i+1$ such that the tiles in the column that are either adjacent to column $i$ or below any such tile belong to the left supertile and the rest of the column belong to right supertile.

The horizontal decomposition uses only constant number of only 9 glues thus $O(1)$ tiles. The decomposition tree is balanced so we need only $O(\log n)$ stages. The number of bins required can be $O(n)$ because we may need to keep each column in a separate bin.
\end{proof}



\section{Fast Counters at Temperature
$\tau=1$}\label{sec:fastCounters}
\begin{figure}[h]
\centering
\includegraphics[scale=1.0]{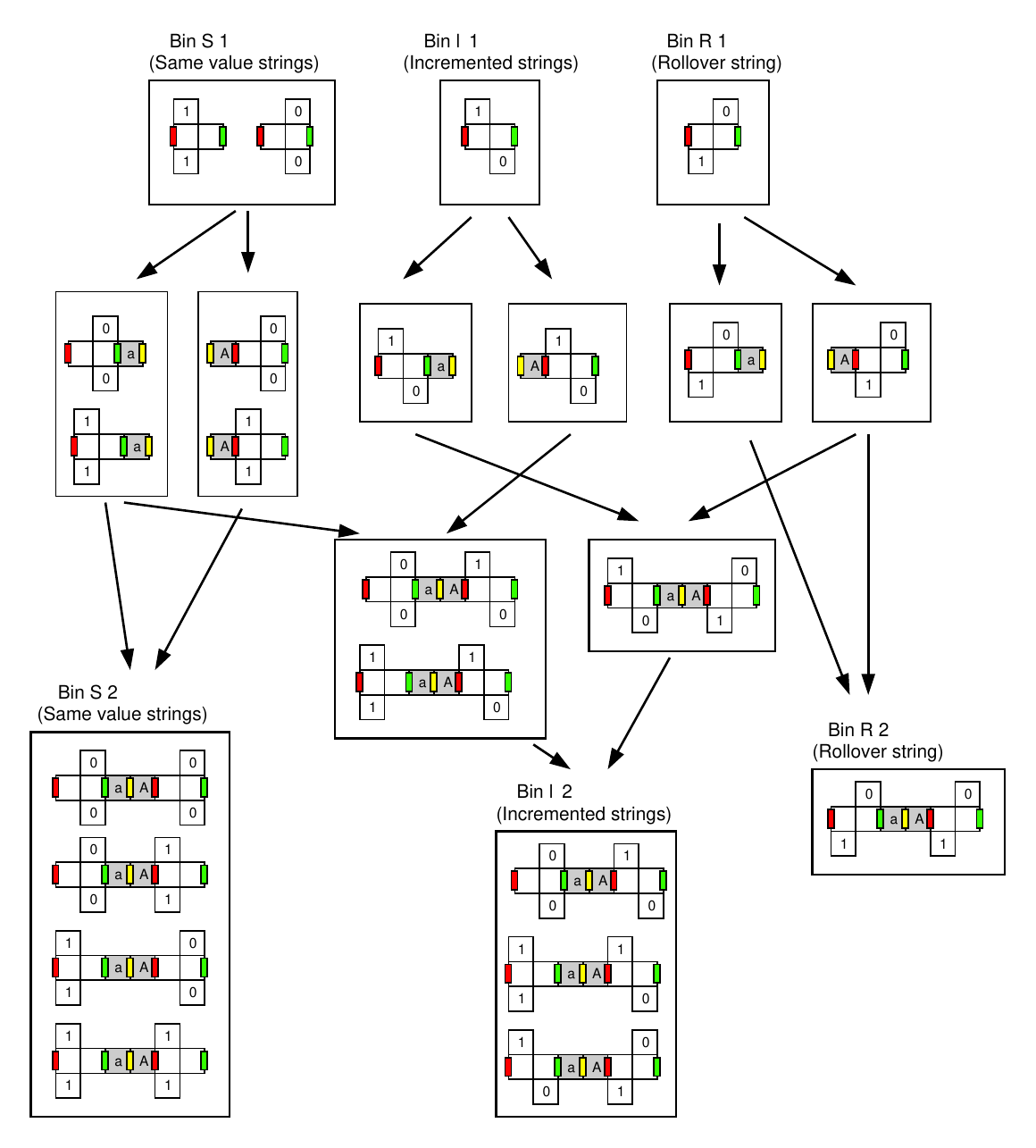}
\caption{Supertiles capable of binary counting can be constructed
efficiently by a simple recursive mixing algorithm.  A set of binary
strings of length $x$ can be assembled in $O(1)$ bin complexity and
$O(\log x)$ stage complexity.} \label{figure:counterTiles}
\end{figure}

\begin{figure}[h]
\centering
\includegraphics[scale=1.0]{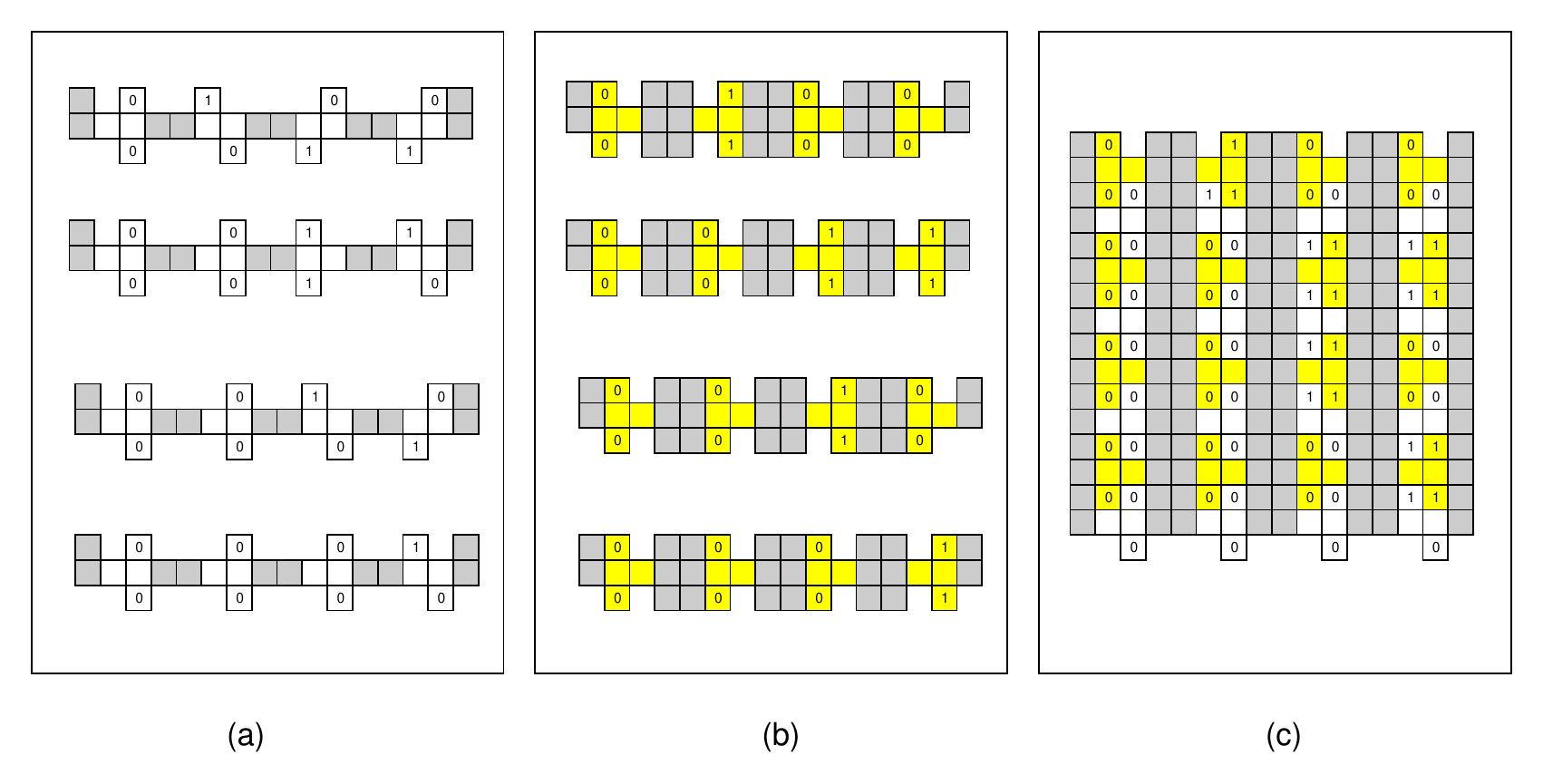}
\caption{((a) and (b)) With $O(1)$ tile complexity, $O(1)$ bin
complexity, and $O(k)$ stage complexity, two separate batches of
supertiles can be created, each containing $2^{2^k}$ distinct
supertiles. (c) When combined, supertiles may attach together by
alternating between supertiles from each group.  Further, attachment
is only possible between supertiles whose binary strings denote
values that are of difference exactly one.  The effect is thus an
assembly whose bit pattern encoded row by row represents a counter
incrementing by one until the maximum value is reached, yielding a
length $O(2^{2^k})$ assembly. In this example of length 4 strings,
only 4 of the possible 16 supertiles are shown.  With this
construction, in contrast to single stage assemblies, two successive
counter values may attach independent of whether or not previous
values have attached. Thus, the resultant structure should assemble
much quicker than other methods in which each row of a counter must
be added in succession, starting from an initial seed row.}
\label{figure:counterTilesExample}
\end{figure}

One of the most powerful and prevalent tools in the algorithmic
self-assembly literature is the
counter~\cite{Rothemund-Winfree-2000,Adleman-Cheng-Goel-Huang-2001,Aggarwal-Cheng-Goldwasser-Kao-Espanes-Moisset-Schweller-2005,Kao-Schweller-2006,Winfree-1998,Barish-Rothemund-Winfree-2005}.
A set of tiles that implement a counter are tiles that assemble into
a pattern such that successive positive integer positions are
encoded into successive positions in the assembled shape.  Such
constructions will typically then control the length of the
assembled shape by stopping growth when the counter reaches its
maximum value.  In this section we introduce a new method of
building counters in the tile assembly model that takes advantage of
the power of staged assembly.  We argue that our approach yields
some important benefits in terms of assembly speed and temperature
$\tau =1$.  Given the proven utility of counter assemblies, we
provide our construction as a primitive tool that may be useful in
the development of more efficient assembly systems.

The most typical example of a counter consists of a tile set where
each tile type is conceptually assigned either a '0' or a '1' binary
label.  For some specified value $k$, such a system assembles a $k
\times 2^k$ rectangle such that for any row $i$ in the assembly, the
$k$ tiles in the row $i$ encode the binary value of $i$ by their
assigned labels.

Counters under the standard single stage model suffer from two
drawbacks.  First, they require temperature $\tau = 2$ to work.
Second, all the constructions to date in the literature are designed
so that the $i^{th}$ value of the counter cannot attach/assemble
until the 1st through $i-1$ values have already assembled.  This
creates a lower bound of $\Omega(n)$ assembly time for these
constructions (see~\cite{Adleman-Cheng-Goel-Huang-2001} for a
definition of assembly time under the standard model).

In our construction of a binary counter, we attempt to improve upon both of
these drawbacks.  First, our construction utilizes temperature $\tau
= 1$. Second, the construction may assembly in a parallel manner.  That is, the
supertile encoding the value $i$ can attach to the supertile
encoding the value $i+1$ at any time, regardless of whether or not
the supertile representing the value $i-1$ has attached to anything.
While a definition of assembly time under the two-handed assembly
model has not yet been developed, it is plausible that this parallelism could yield a substantial reduction in assembly time for a reasonable model.

\subsection{Counter Construction}\label{sec:counterConst}

To implement the staged assembly binary counter, we design a mixing
algorithm to yield two batches of supertiles as shown in
Figure~\ref{figure:counterTilesExample}, each including a list of
long thin supertiles encoding a bit pattern of interlocking teeth on
the north and south surface of the supertile in the same fashion as
Theorem~\ref{thm:simulation}. In particular, the first batch will
consist of supertiles whose pattern of interlocking teeth on the
north face of the supertile encode the binary string obtained by
incrementing the binary string encoded on the south face of the
supertile by one.  The second batch is similar, but the string
encoded on the north and south face of each supertile is not
incremented.

By design, the glues on the north and south faces of each supertile
in either batch are distinct, making attachment among supertiles
impossible.  However, we can make the north glues used in the first
batch the same as the south glues used in the second batch, and vice
versa.  From this we get that when the two batches are mixed
together supertiles may assemble by alternating between supertiles
from the first batch and supertiles from the second batch.

Further, due to the geometry encoded on the surface of each
supertile, each supertile attaches above a supertile whose binary
value is exactly one less than its own. Thus, any assembled
structure consists of a chain of rows, each row representing an
incremented binary value.  Therefore the unique terminal assembly is
such that the northmost face is a supertile encoding the highest
value string of all 1's, while the south face consists of the
supertile representing the string of all 0's.

To see how to assemble the binary strings used in this construction,
consider the problem of assembling a set of supertiles such that
each of the $2^{2^i}$ length $2^i$ binary strings is represented by
a supertile encoding the string on its south surface.  Further, for
each such supertile in the set, require that it encodes the value
encoded on its south surface incremented by 1 on its north surface
(assume the all 1's string incremented is the all 0's string).
Denote this set as $X_i$.  Such a set is essentially the first batch
of Figure~\ref{figure:counterTilesExample}, and a straightforward
modification of the following technique can yield the second batch
as well.

Now, to obtain a bin whose unique assemblies are $X_i$, it is
sufficient to obtain 2 bins whose assemblies union are equal to
$X_i$, as these bins can be combined within 1 stage.  Let $I_i$ ($I$
for incremented strings) denote the subset of strings in $X_i$ minus
the string whose south surface is all 1's.  Let $R_i$ ($R$ for
rollover strings) simply be the supertile encoding all 1's on the
south face and all 0's on the north face.  Finally, define a third
set $S_i$ not contained in $X_i$, where $S_i$ is the set of all
length $i$ strings encoding the same value on the south and north
face of the supertile.

To describe how to attain a bin with the set $X_i$ as uniquely
produced supertiles, we show how to recursively compute the three
sets $S_i$, $I_i$, and $R_i$.  Assume, as depicted in
Figure~\ref{figure:counterTiles}, that each supertile in the sets
$S$, $I$, and $R$ must have a strength 1 red and green glue on the
west most and east most center edge respectively.

Recursively, assume we already have 3 separate bins containing
$S_{i/2}$, $I_{i/2}$, and $R_{i/2}$.  Within a single stage, split
the contents of each of these three bins into 2 separate bins (for a
total of six distinct bins).  Denote the bins by $S^a_{i/2}$ and
$S^A{i/2}$ etc.  For the $a$ bins, add tile $a$ from
Figure~\ref{figure:counterTiles}. For the $A$ bins, add tile $A$.

Now combine sets $S^a_{i/2}$ and $S^A_{i/2}$.  This yields a bin
containing the set of all length $i$ binary strings that have the
same values on the north and south faces, which is the set $S_i$.

Now combine set $S^a_{i/2}$ and $I^A_{i/2}$.  This yields a set of
supertiles that is a subset of $I_i$, namely the strings (encoded on
the south face of the supertile) whose least significant 0 occurs in
the right half of the string.  The remaining set of $I_i$, the
strings whose least significant 0 occurs in the left half of the
string, is obtained by combining $I^a_{i/2}$ and $R^A_{i/2}$. A
third stage thus yields the set $I_i$.

Finally, the set $R_i$ is obtained by combining $R^a_{i/2}$ and
$R^A_{i/2}$.

As base case for this recursive mixing procedure, we can build the
sets for $i=1$ by brute force with distinct tile types.  This
technique uses at most 6 bins and 3 stages per recursion level.
Thus, the desired set $X_x$ can be obtained in $O(1)$ bins and
$O(\log x)$ stages.  The procedure for extending size 1 strings to
size 2 strings is depicted in Figure~\ref{figure:counterTiles}.

\subsection{Counting up to general $n$}
The counter described in Section~\ref{sec:counterConst} counts from
value 0 up to $2^{2^k} -1$ for a specified value $k$ using $k$
stages, $O(1)$ bins, and $O(1)$ tile complexity. However, this
construction clearly is not immediately capable of assembling
supertiles of arbitrary length $n$.  In contrast, constructions
exist at $\tau=2$ under the single stage model such that the exact
length of counters can be specified.  This can typically be done by
specifying an initial first value of the counter as these systems
always start from a seed value. However, with our approach this is
much more difficult.

Currently, we have a complex construction combining the technique of
Theorem~\ref{thm:crazyupper} with the binary counter system of
Section~\ref{sec:counterConst} to yield unique assembly of a counter
of any length $n$ at temperature $\tau =1$, $O(B)$ bin complexity,
and $O(\frac{\log n}{B^2} + \log B)$ stage complexity.  However, we
do not include the details of this construction as it is very
complex and as of yet does not have direct application to building
shapes of interest, such as squares.  However, we conjecture that
this technique can yield a square building scheme that improves
Theorem\ref{thm:crazyupper} to a $\tau =1$ construction.

\section{Future Directions}
\label{sec:summary}

There are several open research questions stemming from this work.

One direction is to relax the assumption that, at each stage,
all supertiles self-assemble to completion.  In practice, it is
likely that at least some tiles will fail to reach their terminal
assembly before the start of the next stage.  Can a staged assembly
be robust against such errors, or at least detect these errors
by some filtering, or can we bound the error propagation in some
probabilistic model?

Another direction is to develop a model of the assembly time
required by a mixing operation involving two bins of tiles.
Such models exist for (one-stage) \emph{seeded self-assembly}---which
starts with a seed tile and places singleton tiles one at a time---but
this model fails to capture the more parallel nature of two-handed assembly
in which large supertiles can bond together without a seed.
Another interesting direction would be to consider nondeterministic assembly
in which a tile system is capable of building a large class of distinct
shapes.  Is it possible to design the system so that certain shapes
are assembled with high probability?

Another research direction is the consideration of 3D assembly.  We have focused on two-dimensional constructions in this paper
which provides a more direct comparison with previous models,
and 
 is also a case of practical interest, e.g., for manufacturing sieves.
Many of our results, in theory, also generalize to 3D (or any constant dimension),
at the cost of increasing the number of glues and tiles.
For example, the spanning-tree model generalizes trivially, and a modification
to the jigsaw idea enables many of the other results to carry over.  However, 3D assembly in practice is much harder than 2D assembly, stemming in part from the fact that 2D assembly systems in practice make use of 3 dimensions.  How to properly model and address the difficulties of 3D assembly is an important research direction.  In particular, combining our staged assembly techniques with existing error correcting mechanisms seems a potentially fruitful direction for further research.

Finally, experimental validation of our model and techniques is an extremely important direction for future work.  It is likely that simulations and implementations of staged assembly techniques will yield key insights into the model, providing a road map for future work.



\paragraph{Acknowledgments.}

We thank M. S. AtKisson and Edward Goldberg for extensive discussions
about the bioengineering application.


\let\realbibitem=\bibitem
\def\bibitem{\par \vspace{-1.2ex}\realbibitem}

\begin{small}
\bibliography{selfassembly,tiling}
\bibliographystyle{alpha}
\end{small}


\end{document}